\definecolor{Red}{rgb}{1,0,0}
\definecolor{Blue}{rgb}{0,0,1}
\definecolor{Olive}{rgb}{0.41,0.55,0.13}
\definecolor{Green}{rgb}{0,1,0}
\definecolor{MGreen}{rgb}{0,0.8,0}
\definecolor{DGreen}{rgb}{0,0.55,0}
\definecolor{Yellow}{rgb}{1,1,0}
\definecolor{Cyan}{rgb}{0,1,1}
\definecolor{Magenta}{rgb}{1,0,1}
\definecolor{Orange}{rgb}{1,.5,0}
\definecolor{Violet}{rgb}{.5,0,.5}
\definecolor{Purple}{rgb}{.75,0,.25}
\definecolor{Brown}{rgb}{.75,.5,.25}
\definecolor{Grey}{rgb}{.5,.5,.5}
\definecolor{Black}{rgb}{0,0,0}
\newtheorem{defi}{Definition}
\newtheorem{algo}{Algorithm}
\newtheorem{prop}{Proposition}
\newtheorem{lemma}{Lemma}
\newtheorem{claim}{Claim}
\newtheorem{example}{Example}
\theoremstyle{remark}
\newcommand{\T}{\mathcal{T}}
\newcommand{\Net}{\mathcal{N}}
\newcommand{\Network}{\mathcal{N} = (\mathcal{X}, \mathcal{S}, w)}
\newcommand{\Taxa}{\mathcal{X}}
\newcommand{\Splits}{\mathcal{S}}
\newcommand{\ellipseC}{\Delta + \Omega + \tau}
\newcommand{\ellipseR}{3\Delta + 7\Omega + 8\tau}
\newcommand{\M}{3\Delta + 7\Omega + 10\tau}
\newcommand{\connectD}{\Delta + 2\Omega + \tau}
\newcommand{\AllSmallSplits}{\hat{\mathcal{S}}|_{B(x,y)}}
\newcommand{\SmallSplit}{\hat{S}|_{C(x,y)}}
\newcommand{\SmallSplits}{\hat{\mathcal{S}}|_{C(x,y)}}
\newcommand{\BigSplit}{\hat{S}|_{x,y}}
\newcommand{\BigSplits}{\hat{\mathcal{S}}|_{x,y}}
\newcommand{\AllSmallSplitsO}{\mathcal{S}|_{B(x,y)}}
\newcommand{\SmallSplitsO}{\mathcal{S}|_{C(x,y)}}
\newcommand{\BigSplitsO}{\mathcal{S}|_{x,y}}
\newcommand{\Split}[1]{S^{(#1)} = \{S^{(#1)}_1, S^{(#1)}_2\}}
\newcommand{\Comp}{\mathscr{C}}
\newcommand{\Inco}{\mathscr{I}}
\newcommand{\sep}{\Sigma}
\newcommand{\nsep}{\Gamma}
\newcommand{\connects}[3]{{#1}\overset{#3}{\leftrightharpoons}{#2}}
\author{Sebastien Roch\footnote{Department of Mathematics at the University of Wisconsin--Madison.
		Work supported by NSF grants DMS-1007144,  DMS-1149312 (CAREER), and DMS-1614242.} 
		\and 
		Kun-Chieh Wang\footnote{Supported by NSF grant DMS-1149312 and DMS-1149312 to SR.} }
\date{\today}
\title{Circular Networks from Distorted Metrics\footnote{Keywords: Phylogenetic networks; Circular networks; Finite metrics; Split decomposition; Distance-based reconstruction; Distorted metrics; Sequence-length requirement.}}
\begin{document}

\maketitle
\begin{abstract}
Trees have long been used as a graphical representation of species relationships. However complex evolutionary events, such as genetic reassortments or hybrid speciations which occur commonly in viruses, bacteria and plants, do not fit into this elementary framework. Alternatively, various network representations have been developed. Circular networks are a natural generalization of leaf-labeled trees interpreted as split systems, that is, collections of bipartitions over leaf labels corresponding to current species. Although such networks do not explicitly model specific evolutionary events of interest, their straightforward visualization and fast reconstruction have made them a popular exploratory tool to detect network-like evolution in genetic datasets.

Standard reconstruction methods for circular networks, such as Neighbor-Net, rely on an associated metric on the species set. Such a metric is first estimated from DNA sequences, which leads to a key difficulty: distantly related sequences produce statistically unreliable estimates. This is problematic for Neighbor-Net as it is based on the popular tree reconstruction method Neighbor-Joining, whose sensitivity to distance estimation errors is well established theoretically. In the tree case, more robust reconstruction methods have been developed using the notion of a distorted metric, which captures the dependence of the error in the distance through a radius of accuracy. Here we design the first circular network reconstruction method based on distorted metrics. Our method is computationally efficient. Moreover, the analysis of its radius of accuracy highlights the important role played by the maximum incompatibility, a measure of the extent to which the network differs from a tree.
\end{abstract}

\thispagestyle{empty}

  \clearpage

\section{Introduction}

Trees have long been used to represent species relationships~\cite{Felsenstein:04,Steel:16,Warnow:u}. The leaves of a phylogenetic
tree correspond to current species while its branchings indicate past speciation events. However, complex evolutionary events, such as genetic reassortments or hybrid speciations, do not fit into this elementary framework. Such non-tree-like events play an important role
in the evolution of viruses, bacteria and plants. This issue has led to the development of various notions of {\em phylogenetic networks}~\cite{Book}. 

A natural generalization of phylogenetic trees is obtained by representing them as split networks, that is, collections of bipartitions over the species set. 
On a tree whose leaves are labeled by species names,
each edge can be thought of as a bipartition over the
species: removing the edge produces 
exactly two connected components. 
In this representation,
trees are characterized by the fact that their splits have a certain compatibility property~\cite{SempleSteel:03}. More generally, circular networks
relax this compatibility property, while retaining enough structure to be useful as representations of evolutionary history~\cite{SplitDecomposition}. Such networks are widely used in practice.
Although they do not explicitly model specific evolutionary events, their straightforward visualization and fast reconstruction have made them a popular exploratory tool to detect network-like evolution in genetic datasets~\cite{Huson01022006}. They are also
useful in cases where data is insufficient to 
single out a unique tree-like history, but instead supports 
many possible evolutionary scenarios.

Standard reconstruction methods for circular networks, such as the Neighbor-Net algorithm introduced in~\cite{NeighborNet}, rely on a metric on the species set. Such a metric, which quantifies how far apart species are in the Tree of Life, is estimated from genetic data. Very roughly, it counts
how many mutations separate any two species. This leads to a key difficulty: under standard stochastic models
of DNA evolution, distantly related sequences are known to produce statistically unreliable distance estimates~\cite{ErStSzWa:99a,ErStSzWa:99b}. This is problematic for Neighbor-Net, in particular, as it is based on the popular tree reconstruction method Neighbor-Joining, whose sensitivity to distance estimation errors is well established theoretically~\cite{LaceyChang:06}. 

In the tree case, more robust reconstruction methods were developed using the notion of a distorted metric which captures the dependence of the error in the distance through a radius of accuracy~\cite{03KZZ,Mossel:07}. 
A key insight to come out of this line of work,
starting with the seminal results of~\cite{ErStSzWa:99a,ErStSzWa:99b},
is that a phylogenetic tree 
can be reconstructed using only a subset of
the pairwise distances---those less than roughly the
chord depth of the tree. Here the chord depth of an edge is
the shortest path between two leaves passing through
that edge and the chord depth of the tree is the maximum
depth among its edges. This result is remarkable because,
in general,
the depth can be significantly smaller than
the diameter. As a consequence,
a number of results have been obtained showing that,
under common stochastic models of sequence evolution,
a polynomial amount of data suffices to reconstruct
a phylogenetic tree with bounded branch lengths. See e.g.~\cite{CrGoGo:02,MosselRoch:06,TreeCase,RSA:RSA20372}.
This approach has also inspired practical reconstruction methods~\cite{DCM,DCM3}.

Here we design the first reconstruction method for circular networks based on distorted metrics. In addition to generalizing the chord depth, we show that, unlike the tree case, pairwise distances within the chord depth do not
in general suffice to reconstruct these networks. We introduce the
notion of maximum incompatibility, a measure of the extent to which the network differs from a tree,
to obtain a tight (up to a constant) bound on the
required radius of accuracy. Before stating our main results,
we provide some background on split networks.

\subsection{Background}

We start with some basic definitions. 
See~\cite{Book} for an in-depth exposition.
\begin{defi}[Split networks~\cite{SplitDecomposition}]
A {\bf split} $S = (S_1, S_2)$ on a set of taxa $\Taxa$ is an unordered bipartition of $\Taxa$ into two non-empty, disjoint sets: $S_1, S_2 \in \Taxa$, $S_1\cap S_2 =\emptyset$, $S_1\cup S_2 = \Taxa$.
We say that $\Net = (\Taxa,\Splits, w)$ is a {\bf weighted split network} (or split network for short) on a set of $\Taxa$ if $\Splits$ is a set of splits on $\Taxa$ and $w: \Splits \to (0,\infty)$ is a positive split weight function. We assume that any two splits $\Split{1}$, $\Split{2}$ in $\Splits$ are distinct, that is, $S^{(1)}_1 \neq S^{(2)}_1, S^{(2)}_2$.
\end{defi}
\noindent For any $x, y \in \Taxa$, we let
$\Splits|_{x,y}$ be the collection of splits in $\Splits$
separating $x$ and $y$, that is,
$$
\Splits|_{x,y}
= \{S \in \Splits\,:\, \delta_S(x,y) = 1\},
$$
where $\delta_S(x,y)$, known as the split metric, is the indicator of whether $S = (S_1, S_2)$ separates $x$ and $y$
\begin{eqnarray}
	\delta_S(x,y) = \left\{
	\begin{array}{ll}
		0, & \text{if } x,y\in S_1\text{ or }x,y\in S_2. \\
		1. & \text{otherwise}.
	\end{array}
	\right.
	\label{eq:separation-metric}
\end{eqnarray}
For a split $S \in \Splits|_{x,y}$, we write
$S = \{S_x, S_y\}$ where $x \in S_x$ and $y \in S_y$.
For simplicity, we assume that
$\Splits|_{x,y} \neq \emptyset$ for all $x,y \in \Taxa$. (Taxa not separated by a split can be identified.)

Let $T = (V,E)$ be a binary tree with leaf set $\Taxa$ and non-negative edge weight function $w : E \to [0,+\infty)$. We refer to  
$\T = ( \Taxa, V, E, w)$ as a phylogenetic tree. 
Any phylogenetic tree can be represented as a weighted split network. For each edge $e \in E$, 
define a split on $\Taxa$ as follows: after deleting $e$,
the vertices of $\T$ form two disjoint connected components with corresponding leaf sets $S^1$ and $S^2$;  we let $S_e = \{S^1, S^2\}$ be the split  generated by $e$ in this way. Conversely, one may ask: given a split network $\Net = (\Taxa, \Splits, w)$, is there a phylogenetic tree $\T = ( \Taxa, V, E, w)$ such that $\Splits = \{S_e: e\in E\}$? To answer this question, we need the concept of compatibility.
\begin{defi}[Compatibility \cite{Compatibility}]
Two splits $\Split{1}$ and $\Split{2}$ are called {\bf compatible}, if at least one of the following intersections is empty:
\begin{eqnarray*}
S_1^{(1)}\cap S_1^{(2)}, \quad S_1^{(1)}\cap S_2^{(2)}, \quad S_2^{(1)}\cap S_1^{(2)}, \quad S_2^{(1)}\cap S_2^{(2)}.
\end{eqnarray*} 
We write $S^{(1)} \sim S^{(2)}$ to indicate that
$S^{(1)}$ and $S^{(2)}$ are compatible.
Otherwise, we say that the two splits are {\bf incompatible}. A set of splits $\Splits$ is called compatible if all pairs of splits in $\Splits$ are compatible.
\end{defi}
\noindent In words, for any two splits, there is one side of one and one side of the other that are disjoint. The following result was first proved in~\cite{Compatibility}.
Given a split network $\Net = (\Taxa, \Splits, w)$, there is a phylogenetic tree $\mathcal{T} = ( \Taxa, V, E, w)$ such that $\Splits = \{S_e: e\in E\}$ if and only if $\Splits$ is compatible. For a collection of 
splits $S^{(1)}, \ldots, S^{(\ell)}$ on $\Taxa$, we let
\begin{equation}
\label{eq:def-compatible}
\Comp_\Net(S^{(1)}, \ldots, S^{(\ell)})
= \{S \in \Splits\,:\, S \sim S^{(i)}, \forall i\},
\end{equation}
be the set of splits of $\Net$ compatible with all
splits in $S^{(1)}, \ldots, S^{(\ell)}$, and we let
\begin{equation}
\label{eq:def-incompatible}
\Inco_\Net(S^{(1)}, \ldots, S^{(\ell)})
= \{S \in \Splits\,:\, \exists i, S \nsim S^{(i)}\},
\end{equation}
be the set of splits of $\Net$ incompatible with at least one
split in $S^{(1)}, \ldots, S^{(\ell)}$. We drop the subscript $\Net$ when the network is clear from context.

Most split networks cannot be realized as phylogenetic trees. The following is an important special class of more general split networks.
\begin{defi}[Circular networks~\cite{SplitDecomposition}]
A collection of splits $\Splits$ on $\Taxa$ is called {\bf circular} if there exists a linear ordering $(x_1, \dots, x_n)$ of the elements of $\Taxa$ for $\Splits$ such that each split $S\in \Splits$ has the form:
\begin{eqnarray*}
S = \{\: \{x_{p}, \dots, x_{q}\}\:, \:\Taxa - \{x_{p}, \dots, x_{q}\}\:\}
\end{eqnarray*}
for $1 < p \leq q \leq n$. We say that a split network $\Net = \{\Taxa, \Splits, w\}$ is a {\bf circular network} if $\Splits$ is circular.
\end{defi}
\noindent Phylogenetic trees, seen as split networks, are special cases of circular networks (e.g.~\cite{Book}). Circular networks have
the appealing feature that they cannot contain
too many splits. Indeed, let $\Network$ be 
a circular network with $|\Taxa| = n$. Then
$|\Splits| = O(n^2)$~\cite{SplitDecomposition}.
In general, circular networks are harder to interpret than
trees are. In fact, they are not meant to represent
explicit evolutionary events. However,
they admit an appealing vizualization in the
form of an outer-labeled (i.e., the taxa are
on the outside) planar graph that gives some
insight into how ``close to a tree'' the network is.
As such, they are popular exploratory analysis tools.
We will not describe this vizualization and how it is used here,
as it is quite involved. See, e.g.,~\cite[Chapter 5]{Book}
for a formal definition and~\cite{Huson01022006} for examples of applications.

Split networks are naturally associated with
a metric. We refer to a function $d:\Taxa\times \Taxa\to [0,+\infty]$ as a {\bf dissimilarity} over $\Taxa$ if it is symmetric and
$d(x,x) = 0$ for all $x$.
\begin{defi}[Metric associated to a split network]
	\label{metric}
	Let $\Net = (\Taxa, \Splits,w)$  be a split network.
	The dissimilarity $d:\Taxa\times \Taxa\to [0,\infty)$ defined as follows
	\begin{eqnarray*}
		d(x,y)=\sum_{S\in\Splits|_{x,y}} w(S),
	\end{eqnarray*}
	for all
	$x,y \in \Taxa$, is referred to as the metric
	associated to $\Net$.
	(It can be shown that
	$d$ is indeed a metric. In particular, it satisfies the triangle inequality.)
\end{defi} 
\noindent The metric associated with a circular network
can be used to reconstruct it. 
\begin{defi}[$d$-splits]
Let $d:\Taxa\times \Taxa\to [0,\infty)$ be a dissimilarity.
The {\bf isolation index} $\alpha_d(S)$ of a split $S = \{S_1, S_2\}$ over $\Taxa$
is given by
\begin{eqnarray*}
	\alpha_d(S)=\min\{\tilde{\alpha}_d(x_1,y_1|x_2,y_2)\,:\, x_1,y_1\in S_1, x_2,y_2\in S_2\},
\end{eqnarray*}
where
\begin{eqnarray*}
	\tilde{\alpha}_d(x_1,y_1|x_2,y_2) &=& \frac{1}{2}(\max\{d(x_1,y_1)+d(x_2,y_2), d(x_1,x_2)+d(y_1,y_2), \\
	&& \qquad\qquad  d(x_1,y_2)+d(y_1,x_2)\}-d(x_1,y_1)-d(x_2,y_2)).
\end{eqnarray*}
(Note that the latter is always non-negative.)
We say that $S$ is a {\bf $d$-split} if $\alpha_d(S) > 0$.
\end{defi}
\noindent The following result establishes that
circular networks can be reconstructed from
their associated metric.
\begin{lemma}[$d$-splits and circular networks~\cite{SplitDecomposition}]
\label{lemma:dsplits-weakcomp}
Let $\Taxa$ be a set of $n$ taxa and
let $\Network$ be a circular network with
associated metric $d$. Then $\Splits$ coincides with the
set of all $d$-splits of $\Network$. Further the isolation index $\alpha_d(S)$ equals $w(S)$ for all $S\in\Splits$. 
\end{lemma}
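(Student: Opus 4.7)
The plan is to compute the isolation index $\alpha_d(S)$ for an arbitrary split $S$ on $\Taxa$ by decomposing $d$ according to Definition~\ref{metric} and reducing the evaluation of $\tilde{\alpha}_d$ on each quadruple to a per-split counting argument that exploits the circular ordering of $\Net$.

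First, I would fix the circular ordering $(x_1,\dots,x_n)$ witnessing that $\Splits$ is circular and establish a four-point condition: for any four taxa occurring cyclically as $a\prec b\prec c\prec d$,
\begin{equation*}
d(a,c)+d(b,d)\;\geq\;\max\{\,d(a,b)+d(c,d),\;d(a,d)+d(b,c)\,\}.
\end{equation*}
The proof is split by split: since each $S'\in\Splits$ is a circular arc, its intersection with $\{a,b,c,d\}$ is, up to taking complement, one of $\emptyset$, a singleton, one of the four cyclically consecutive pairs $\{a,b\}$, $\{b,c\}$, $\{c,d\}$, $\{a,d\}$, or a triple. A short case check shows that in each of these configurations $\delta_{S'}(a,c)+\delta_{S'}(b,d)$ is at least as large as the two other analogous sums. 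Weighting by $w(S')$ and summing over $\Splits$ yields the inequality.

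Next, I would apply this to compute $\tilde{\alpha}_d$ on a quadruple adapted to a split $S=\{S_1,S_2\}\in\Splits$. Choose $x_1,y_1\in S_1$ and $x_2,y_2\in S_2$ and, after relabelling within each side, arrange them cyclically as $x_1\prec y_1\prec x_2\prec y_2$. The four-point condition identifies the dominant term of the max defining $\tilde\alpha_d$, giving
\begin{equation*}
\tilde{\alpha}_d(x_1,y_1\,|\,x_2,y_2)\;=\;\frac{1}{2}\bigl(d(x_1,x_2)+d(y_1,y_2)-d(x_1,y_1)-d(x_2,y_2)\bigr).
\end{equation*}
Expanding via the split decomposition, the same per-split case analysis shows that $S'\in\Splits$ contributes $w(S')$ to this expression precisely when $S'$ places $\{x_1,y_1\}$ on one side and $\{x_2,y_2\}$ on the other, and contributes $0$ in every other configuration. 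Hence $\tilde{\alpha}_d(x_1,y_1\,|\,x_2,y_2)$ equals the total weight of splits in $\Splits$ that separate the pair $\{x_1,y_1\}$ from the pair $\{x_2,y_2\}$. Since $S$ itself is always such a split, $\tilde\alpha_d\geq w(S)$ and therefore $\alpha_d(S)\geq w(S)$. Taking $x_1,y_1$ to be the two endpoints of the arc $S_1$ and $x_2,y_2$ the two endpoints of $S_2$ forces $S$ to be the only arc in $\Splits$ that separates the two pairs, yielding the matching upper bound $\alpha_d(S)=w(S)$.

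Finally, I would show that $\alpha_d(S)=0$ for every split $S\notin\Splits$, which establishes that the set of $d$-splits is contained in $\Splits$. If $S$ is not an arc of the circular ordering, one can pick $x_1,y_1\in S_1$ and $x_2,y_2\in S_2$ that interleave cyclically as $x_1\prec x_2\prec y_1\prec y_2$; by the four-point condition the dominant sum is now $d(x_1,y_1)+d(x_2,y_2)$, so $\tilde\alpha_d=0$. If $S$ is an arc but is not an element of $\Splits$, the endpoint construction from the previous step produces a quadruple for which the only candidate contributing split would be $S$ itself, so the total weight is again zero. The main obstacle is the split-by-split enumeration underlying both the four-point condition and the identification of the contributing splits; the degenerate case where one of the arcs has size one (forcing $x_1=y_1$ or $x_2=y_2$) must additionally be handled by a direct application of the triangle inequality, which takes over the role of the four-point condition in this collapsed setting.
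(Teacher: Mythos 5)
Your argument is correct. Note that the paper does not actually prove this lemma---it is quoted from the split-decomposition literature (Bandelt--Dress)---so there is no in-paper proof to compare against; what you have written is a self-contained derivation of the cited result, specialized to circular split systems. Your route is the standard and, in this setting, most economical one: the split-by-split check that every circular split satisfies the Kalmanson four-point inequality $d(a,c)+d(b,d)\geq\max\{d(a,b)+d(c,d),\,d(a,d)+d(b,c)\}$ for cyclically ordered $a\prec b\prec c\prec d$ pins down the dominant term in $\tilde{\alpha}_d$, after which $\tilde{\alpha}_d(x_1,y_1\,|\,x_2,y_2)$ telescopes to the total weight of splits separating $\{x_1,y_1\}$ from $\{x_2,y_2\}$; the lower bound $\alpha_d(S)\geq w(S)$ then holds for every quadruple, the endpoint quadruple gives the matching upper bound (and gives $0$ for arcs outside $\Splits$), and interleaved quadruples kill non-arcs. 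This avoids the general weak-compatibility machinery of the original reference and is noticeably more elementary, at the cost of working only for circular systems---which is all the lemma claims. The two places that genuinely need care, the degenerate one-element side (where $x_1=y_1$ is forced and the triangle inequality replaces the four-point condition) and the uniqueness of the split separating the two endpoint pairs, are both flagged and both check out.
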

\noindent The {\bf split decomposition method}, described in Section~\ref{Algorithm}, reconstructs $\Network$ from $d$ in polynomial time. When $\Net$ is compatible, $d$ is an {\bf additive metric}. See e.g.~\cite{SempleSteel:03,Steel:16}.

In practice one obtains an estimate $\hat{d}$ of $d$, called the {\bf distance matrix}, from DNA sequences, e.g., through the Jukes-Cantor formula~\cite{JukesCantor}
or the log-det distance~\cite{Steel:94}. The accuracy of this estimate depends on the amount of data used~\cite{ErStSzWa:99a,ErStSzWa:99b}. In previous work in the context of tree reconstruction,
distorted metrics were used to  encode the fact that large $d$-values typically produce unreliable $\hat{d}$-estimates.
\begin{defi}[Distorted metrics~\cite{03KZZ,Mossel:07}]
	Suppose $\Network$ is a split network with associated metric $d$. Let $\tau, R>0$. We say that a dissimilarity $\hat{d}: \Taxa\times \Taxa\to[0,+\infty]$ is a {\bf $(\tau,R)$-distorted metric} of $\Net$ if $\hat{d}$ is accurate on ``short'' distances, that is, for all $x,y\in \Taxa$
		\begin{eqnarray*}
			d(x,y)<R+\tau \quad \text{or}\quad  \hat{d}(x,y)<R+\tau \quad \implies \quad |d(x,y)-\hat{d}(x,y)|<\tau.
		\end{eqnarray*} 
	We refer to $\tau$ and $R$ as the {\bf tolerance}
	and {\bf accuracy radius} of $\hat{d}$ respectively.
\end{defi}
\noindent Distorted metrics have previoulsy been motivated by analyzing Markov models on trees that are commonly used to model the evolution of DNA sequences~\cite{ErStSzWa:99a,ErStSzWa:99b}. Such
models have also been extended to split networks~\cite{GroupBasedModel}. 

\subsection{Main results}

By the reconstruction result mentioned above,
any circular network $\Network$ with associated metric $d$ can be reconstructed from a $(\tau,R)$-distorted metric
where $\tau$ is $0$ and $R$ is greater or equal than the diameter $\max\{d(x,y)\,:\, x,y \in \Taxa\}$ of $\Net$.
In the tree case, it has been shown that {\em a much smaller} $R$ suffice~\cite{ErStSzWa:99a,ErStSzWa:99b,Mossel:07,TreeCase}.
Here we establish such results for circular networks.

\paragraph{Chord depth and maximum incompatibility}
To bound the tolerance and accuracy radius
needed to reconstruct a circular
network from a distorted metric, we introduce 
several structural parameters.
The first two parameters generalize naturally from the tree context.
\begin{defi}[Minimum weight]
Let $\Network$ be a split network. The {\bf minimum
	weight} of $\Net$ is given by
$$
\epsilon_\Net 
= \min\{w(S)\,:\,S \in \Splits\}.
$$
\end{defi}
\noindent Let $\Network$ be a split network with associated metric $d$. For a subset of splits $\mathcal{A} \subseteq \Splits$, we let
\begin{equation}
\label{eq:restricted}
d(x,y;\mathcal{A}) = \sum_{S \in \Splits|_{x,y} \cap \mathcal{A}} w(S),
\end{equation}
be the distance between $x$ and $y$ restricted to those
splits in $\mathcal{A}$.
\begin{defi}[Chord depth]
	Let $\Network$ be a split network with associated metric $d$.
	The {\bf chord depth} of a split $S \in \Splits$ is
	\begin{eqnarray*}
		\Delta_\Net(S) = \min\left\{d(x,y;\Comp_\Net(S)) \,:\, x,y\in\Taxa, S \in \Splits|_{x,y} \right\},
	\end{eqnarray*}
	and the {\bf chord depth} of $\Net$ is the largest chord depth among all of its  splits
	\begin{eqnarray*}
		\Delta_\Net = \max\left\{\Delta_\Net(S): S\in\Splits \right\}.
	\end{eqnarray*}
\end{defi}
\noindent It was shown in~\cite[Corollary 1]{TreeCase} that,
if $\Network$ is compatible, then a $(\tau, R)$-distorted metric 
with $\tau < \frac{1}{4} \epsilon_\Net$ and $R > 2 \Delta_\Net + \frac{5}{4}\epsilon_\Net$ suffice to reconstruct $\Net$ in
polynomial time (among compatible networks).

For more general circular networks, 
the minimum weight and
chord depth are not sufficient to characterize
the tolerance and accuracy radius required
for reconstructibility; see Example~\ref{ex:max-inco} below. For that purpose, we introduce a new notion that, roughly speaking, measures
the extent to which a split network differs from a tree.
\begin{defi}[Maximum incompatibility]
	Let $\Network$ be a split network.
	The {\bf incompatible weight} of a split $S\in\Splits$ is
	\begin{eqnarray*}
		\Omega_\Net(S) = \sum_{S' \in \Inco(S)} w(S'),
	\end{eqnarray*}
	and the {\bf maximum incompatibility} of $\Net$ is the largest incompatible weight among all of its splits
	\begin{eqnarray*}
		\Omega_\Net = \max\{\Omega_\Net(S): S\in\Splits\}.
	\end{eqnarray*}
\end{defi}
\noindent We drop the subscript in $\epsilon_\Net$,
$\Delta_\Net$ and $\Omega_\Net$ when the $\Net$
is clear from context. 

\paragraph{Statement of results}
We now state our main result.
\begin{restatable}{theorem}{NetworkReconstruction}
	\label{NetworkReconstructionLabel}
	Suppose $\Network$ is a circular network. Given a $(\tau,R)$-distorted metric with $\tau < \frac{1}{4}\epsilon_\Net$ and $R > 3\Delta_\Net + 7 \Omega_\Net + \frac{5}{2} \epsilon_\Net$, the split set $\Splits$
	can be reconstructed in polynomial time together 
	with weight estimates $\hat{w} \,:\, \Splits \to (0,+\infty)$ satisfying $|\hat{w}(S)-w(S)| < 2\tau$.
\end{restatable}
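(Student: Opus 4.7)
The plan is to adapt the split decomposition method to the distorted-metric setting, following the template of~\cite{TreeCase} for trees but with new ingredients needed to control the contribution of incompatible splits. By Lemma~\ref{lemma:dsplits-weakcomp}, the true splits of $\Net$ are exactly the $d$-splits, with $\alpha_d(S) = w(S)$, so it suffices to approximate each isolation index using only those distance queries $\hat{d}(x,y)$ that are guaranteed reliable, namely the ones for which $d(x,y) < R + \tau$. Because $\tau < \epsilon/4$, any estimate of $\alpha_d(S)$ within $2\tau$ of the truth will separate true splits (for which $\alpha_d(S) \geq \epsilon$) from non-splits (for which $\alpha_d(S) = 0$) by thresholding, which is why the weight-accuracy target $2\tau$ in the theorem is exactly what is needed.

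First, for every $S \in \Splits$ I would produce a witness pair $x,y$ separated by $S$ with $d(x,y)$ small. The chord-depth bound yields a pair with $d(x, y; \Comp(S)) \leq \Delta$, while the remaining contribution $d(x, y; \Inco(S))$ is bounded by $\Omega$ by definition of maximum incompatibility, giving $d(x,y) \leq \Delta + \Omega$. I would then work inside a ball $B(x,y)$ of radius roughly $\ellipseC$, chosen so that every taxon in the ball has reliable distance (via $\hat d$) to both $x$ and $y$; the macros $\ellipseR$ and $\M$ that appear throughout the paper's notation encode the ellipse-shaped region of $4$-tuples that serves as the domain for a truncated isolation index, and the extra $\epsilon$-slack in $R$ absorbs the additive errors introduced by the distortion.

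Next, I would define $\hat\alpha(S)$ as the minimum of $\tilde\alpha_{\hat d}(x_1,y_1 \mid x_2,y_2)$ over quadruples with $x_1,y_1 \in S_x \cap B(x,y)$ and $x_2,y_2 \in S_y \cap B(x,y)$, and establish two claims. The first is that every such $4$-tuple has all six pairwise distances strictly below $R+\tau$, so the distortion bound transfers pointwise to $|\tilde\alpha_{\hat d} - \tilde\alpha_d| < 2\tau$. The second is that the minimum of $\tilde\alpha_d$ over the truncated domain equals the full $\alpha_d(S)$; this is where circularity enters crucially, as it forces the ``hard'' quadruples in $\tilde\alpha_d$ to sit in a bounded neighbourhood of the witness pair. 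Combining the two gives $|\hat\alpha(S) - \alpha_d(S)| < 2\tau$, and thresholding then recovers $\Splits$ exactly and yields weight estimates $\hat w(S)$ within $2\tau$. The enumeration is polynomial because $|\Splits| = O(n^2)$ for circular networks, and the truncated index for each candidate split is itself polynomial-time.

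The main obstacle is the second claim: a priori, some far-away witness $(x_1,y_1,x_2,y_2)$ could attain a smaller $\tilde\alpha_d$ than any quadruple inside $B(x,y)$. I expect the argument to proceed by a substitution lemma showing that every such far-away witness can be replaced by a nearby one without decreasing $\tilde\alpha_d$, and this is precisely where the coefficient $7\Omega$ in the accuracy radius arises. Each pairwise distance entering the max-minus-sum formula defining $\tilde\alpha_d$ can drift by up to $\Omega$ when a taxon is substituted across incompatible splits, and the formula aggregates several such terms on top of the chord-depth contributions bounded by $\Delta$. Matching the book-keeping to the ellipse radii $\ellipseR$ and $\M$, so that every distance consulted in the substitution argument remains safely below $R+\tau$, is the technical heart of the proof; the rest reduces to verifying that the resulting algorithm produces a polynomial-time certificate for each of the $O(n^2)$ candidate splits.
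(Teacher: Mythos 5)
There is a genuine gap, and it is the central algorithmic one: your proposal never explains how to \emph{produce} the candidate bipartitions whose isolation indices you propose to estimate. A split is a bipartition of all of $\Taxa$, and there are exponentially many of these; the bound $|\Splits| = O(n^2)$ counts the \emph{true} splits of a circular network, not the candidates one must examine, so it cannot justify a polynomial enumeration. Even locally, defining $\hat\alpha(S)$ over quadruples with $x_1,y_1 \in S_x \cap B(x,y)$ presupposes that you already know $S$, and brute-forcing the bipartitions of $B(x,y)$ is again exponential. The paper resolves this with a two-phase algorithm that is absent from your proposal: first, the \emph{incremental} split decomposition method of Bandelt--Dress is run on $\hat d$ restricted to $B(x,y)$ (adding one taxon at a time, so the candidate set stays polynomial); second, each local split separating $x$ and $y$ is extended to a bipartition of all of $\Taxa$ by greedily absorbing any taxon within $\hat d$-distance $\Delta + 2\Omega + \tau$ of a taxon already placed. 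The correctness of this extension is where most of the work lies, via two lemmas with no counterpart in your sketch: the hoppability lemma (any two taxa are joined by a chain of hops of length at most $\Delta + 2\Omega$, proved through maximal compatible chains and their witnesses), which guarantees termination, and the ``long hop'' lemma (a taxon outside $B(x,y)$ on one side of $S$ is at $\hat d$-distance greater than $\Delta + 2\Omega + \tau$ from every taxon on the other side), which guarantees no taxon is misplaced. The coefficient $7\Omega$ in the radius is calibrated precisely so that the long-hop threshold exceeds the hoppability bound; it does not arise from a substitution argument on quadruples.

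Your ``second claim'' --- that the minimum of $\tilde\alpha_d$ over quadruples inside $B(x,y)$ recovers the weight --- is also left as a conjecture, and the substitution lemma you gesture at is not how the paper proceeds. The paper observes that the restriction of a circular split system to $B(x,y)$ is itself circular, so Lemma~\ref{lemma:dsplits-weakcomp} applies directly to the restricted network $(B(x,y), \AllSmallSplitsO, w|_{B(x,y)})$ and gives $\alpha_d(S') = w|_{B(x,y)}(S')$ for the restricted split $S'$, with no quadruple-moving needed. Note also that $w|_{B(x,y)}(S')$ is a priori the \emph{sum} of the weights of all full splits restricting to $S'$; equating it with $w(S)$ requires proving that $S'$ has a unique extension in $\Splits|_{x,y}$, which the paper obtains as a by-product of the Bipartition Extension analysis. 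So both the enumeration mechanism and the weight identification depend on machinery (compatible chains, hoppability, the extension step) that your proposal omits entirely.
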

\noindent Establishing robustness to noise
of circular network reconstruction algorithms
is important given that, as explained above, such networks
are used in practice to tentatively diagnose deviations from tree-like evolution. Errors due to noise can confound such analyses. See e.g.~\cite{Huson01022006} for a discussion of these issues.

In~\cite[Section 4]{TreeCase}, it was shown that 
in the tree case the accuracy radius
must depend linearly on the depth.
The following example shows that \emph{the accuracy radius
must also depend linearly on the maximum
incompatibility.}
\begin{figure}[!b]
	\centering
	\includegraphics[scale=0.2]{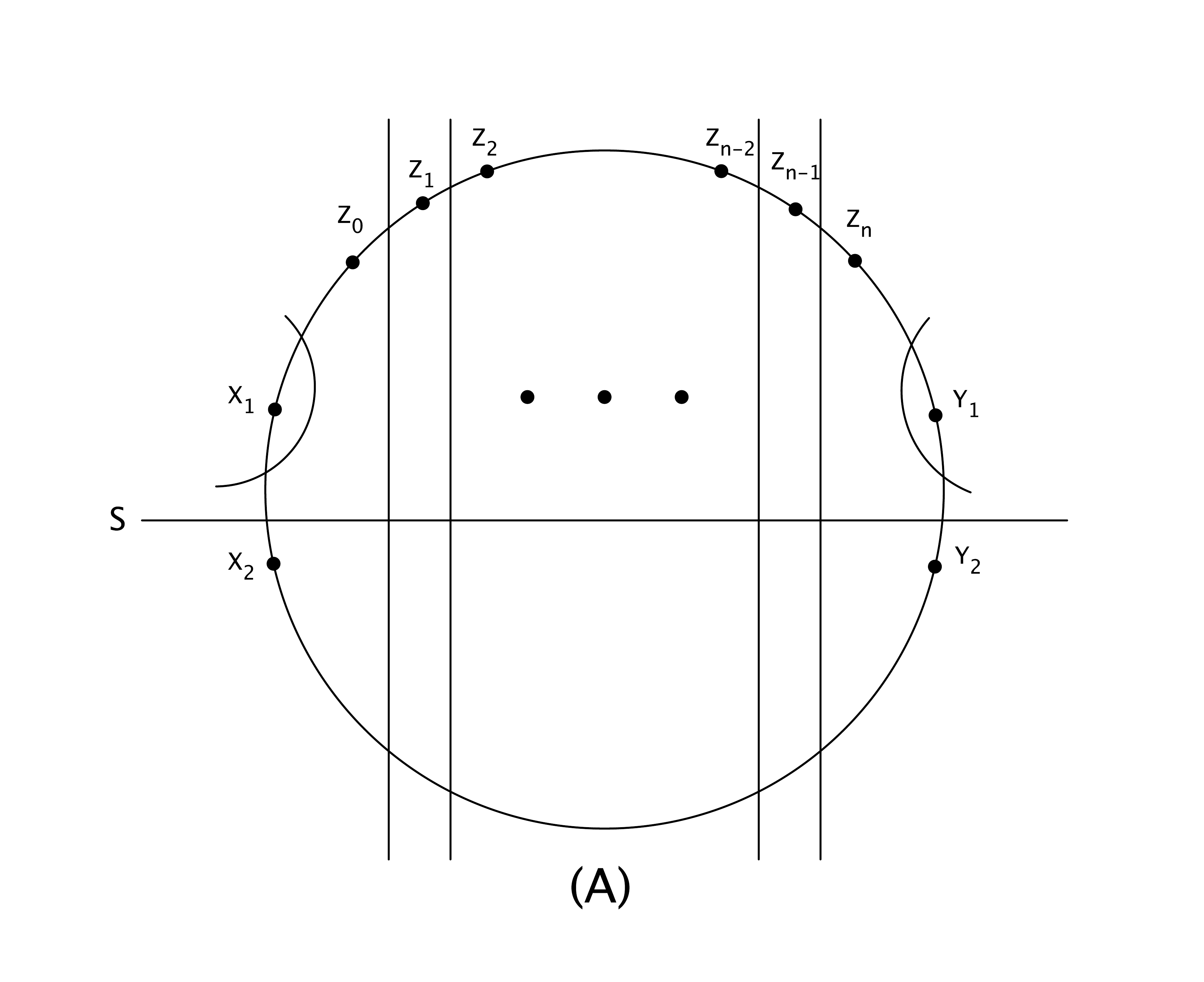}
	\includegraphics[scale=0.2]{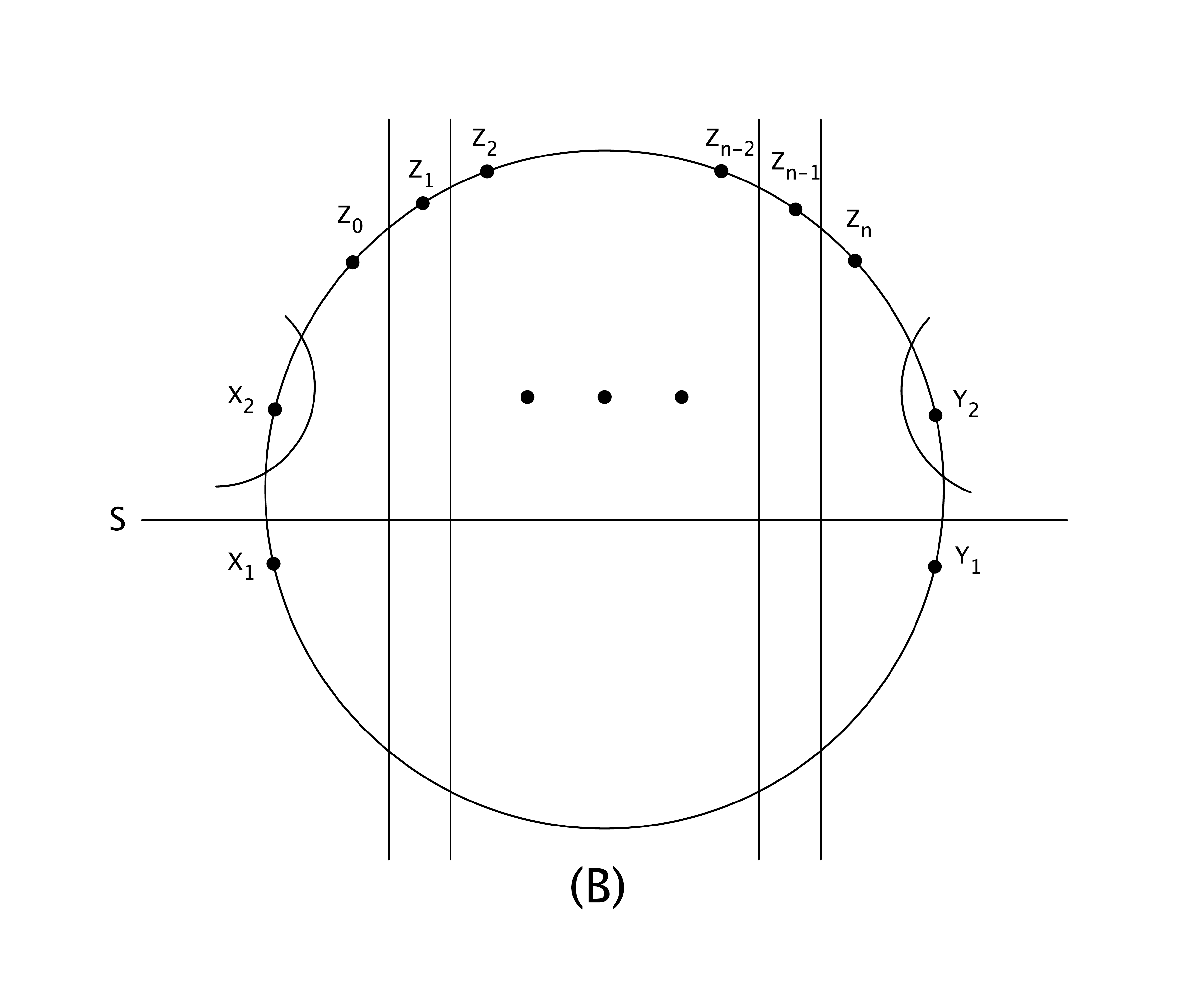}
	\caption{Two circular networks indistinghishable 
		from a distorted metric with sublinear
		dependence on the maximum incompatibility.
		Here the taxa are ordered on a circle and lines indicate splits. For instance, in (A), the leftmost vertical line is the split with $\{z_0, x_1, x_2\}$ on one side and all other taxa on the other.
		In both networks, $\Taxa = \{x_1, x_2, y_1, y_2\} \cup \{ \cup_i z_i\}$, and the $n$ vertical lines, the horizontal line, and the two arcs are splits  of weight 1. }
	\label{LowerBoundSecondClaim}
\end{figure}
\begin{example}[Depth is insufficient; linear dependence in maximum
	incompatibility is needed]
\label{ex:max-inco}
Consider the two circular networks in \Cref{LowerBoundSecondClaim}. In both networks, $\Taxa = \{x_1, x_2, y_1, y_2\} \cup \{ \cup_i z_i\}$, and the $n$ vertical lines, the horizontal line, and the two arcs are splits of weight 1. The chord depth of both networks is $1$ while their maximum incompatibility is $n$. In both networks 
\begin{itemize}
	\item[-] $d(z_i, x_j) = i +1$, $0\leq i \leq n$, $1\leq j \leq 2$,
	\item[-] $d(z_i, y_j) = n-i+1$, $0\leq i \leq n$, $1\leq j \leq 2$,
	\item[-] $d(x_1, x_2) = d(y_1,y_2) = 2$,
	\item[-] $d(x_1,y_2) = d(x_2, y_1) = n+2$.
\end{itemize}
The only difference is that, in graph (A), $d(x_1, y_1) = n+2$ and $d(x_2, y_2) = n$ while, in graph (B), $d(x_2, y_2) = n+2$ and $d(x_1, y_1) = n$. If we choose the distance matrix $\hat{d}$ as follows:
\begin{itemize}
	\item[-] $\hat{d}(x_1, y_1) = \hat{d}(x_2,y_2) = n+1$,
	\item[-] $\hat{d} = d$ for all other pairs,
\end{itemize}
then $\hat{d}$ is a $(\tau, n-1)$-distorted metric of both networks for any $\tau\in (0,1)$. Hence, these two circular networks are indistinguishable from $\hat{d}$. Observe that the chord depth is 1 for any $n$, but the maximum incompatibility can be made arbitrary large.
(Note that the claim still holds if we replace
the chord depth with the ``full chord depth''
$\max\{\min\{d(x,y)\,:\,x,y \in \Taxa, S \in \Splits|_{x,y}\}\,:\, S \in \Splits\}$, which also includes weights of incompatible splits separating $x$ and $y$.)
\end{example}

\paragraph{Proof idea}
Our proof of Theorem~\ref{NetworkReconstructionLabel}
is based on a divide-and-conquer approach of~\cite{TreeCase},
first introduced in~\cite{Mossel:07} and also related to the seminal work of~\cite{ErStSzWa:99a,ErStSzWa:99b} on short quartet methods and the decomposition methods of~\cite{DCM,DCM3}.
More specifically, we first reconstruct sub-networks in regions
of small diameter. We then extend the bipartitions to the full
taxon set by hopping back from each taxon to this small region and recording
which side of the split is reached first. However, the
work of~\cite{TreeCase} relies heavily on the tree
structure, which simplifies many arguments. Our novel contributions here are twofold:
\begin{itemize}
	\item We define the notion of maximum incompatibility and highlight its key role in the reconstruction of circular networks, as we discussed above.
	
	\item We extend the effective divide-and-conquer methodology developed in~\cite{ErStSzWa:99a,ErStSzWa:99b,DCM,DCM3,Mossel:07,TreeCase} to circular networks. The analysis of this more general class of split networks is more involved than the tree case. In particular, we introduce the notion of a compatible chain---an analogue of paths in graphs---which may be of independent interest in the study of split networks. 
	
\end{itemize}


\subsection{Organization}

The rest of the paper is organized as follows. The algorithm is detailed in~\Cref{Algorithm}. The proof of our main theorem can be found in \Cref{Analysis}. 



\section{Algorithm}
\label{Algorithm}

In this section, we describe our reconstruction
algorithm. 

\paragraph{Split decomposition}
One building block of our reconstruction algorithm is the split
decomposition method of~\cite{SplitDecomposition},
which is detailed below.
\begin{algo}
	\label{algo:SD}
	Split decomposition method~\cite{SplitDecomposition}\\
	Given a distance matrix $d$ on $\Taxa=\{z_1, z_2,\dots,z_n\}$, compute the set of all $d$-splits on $\Taxa$ as follows:\\
	Initially, set $\Taxa_1 = \{ z_1\}$ and $\Splits_1 = \emptyset$. Assume we have the set of all $d$-splits $\Splits_i$ on the first $i$ taxa $\Taxa_i=\{z_1, z_2,\dots,z_i\}$. To obtain $\Splits_{i+1}$ on $\Taxa_{i+1}=\{z_1, z_2,\dots,z_{i+1}\}$, for each split $S = \{ S_1, S_2 \}\in\Splits_i$:
	\begin{itemize}
		\item If $\alpha_d(\{S_1\cup\{z_{i+1}\},S_2\})>0$, then add $\{S_1\cup\{z_{i+1}\},S_2\}$ to $\mathcal{S}_{i+1}$. 
		\item If $\alpha_d(\{S_1,S_2\cup\{z_{i+1}\}\})>0$, then add $\{S_1,S_2\cup\{z_{i+1}\}\}$ to $\mathcal{S}_{i+1}$. 
		\item If $\alpha_d(\{\Taxa_i,\{z_{i+1}\}\})>0$, then add $\{\Taxa_i,\{z_{i+1}\}\}$ to $\mathcal{S}_{i+1}$.  
	\end{itemize}
	The result is given by $\mathcal{S}_n$.
\end{algo}
\begin{lemma}[Split decomposition method~\cite{SplitDecomposition}]
	\label{lemma:split-decomp}
Let $d$ be a dissimilarity on $\Taxa$ with $|\Taxa| = n$.
The split decomposition method applied to $d$ 
is guaranteed to return exaclty the set of
$d$-splits in polynomial time.
\end{lemma}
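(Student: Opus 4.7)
The plan is to prove correctness by induction on $i$, the size of the prefix $\Taxa_i=\{z_1,\dots,z_i\}$, together with a separate polynomial-time bound. The base case $\Taxa_1$ is vacuous since there are no splits on a singleton. Granting that $\Splits_i$ equals the set of all $d$-splits on $\Taxa_i$, I would then show both inclusions for $\Splits_{i+1}$.

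The key structural ingredient is a \emph{monotonicity lemma for the isolation index under restriction}: if $A\subseteq\Taxa$ and $S=\{S_1,S_2\}$ is a split of $\Taxa$ such that $S|_A := \{S_1\cap A, S_2\cap A\}$ has both parts non-empty, then
\[
\alpha_d(S|_A)\;\geq\;\alpha_d(S).
\]
This follows immediately from the definition of $\alpha_d$ as a minimum of $\tilde{\alpha}_d(x_1,y_1|x_2,y_2)$ over quadruples with $x_j,y_j\in S_j$: passing from $\Taxa$ to $A$ strictly shrinks the index set of the minimum. Applying this with $A=\Taxa_i$ to any $d$-split $S$ on $\Taxa_{i+1}$ immediately yields the inductive step. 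Indeed, if neither side of $S$ equals $\{z_{i+1}\}$, then $S|_{\Taxa_i}$ is a split of $\Taxa_i$ with $\alpha_d(S|_{\Taxa_i})\geq \alpha_d(S)>0$, hence lies in $\Splits_i$ by induction; the algorithm, when adding $z_{i+1}$ back to the appropriate side and testing the isolation index, therefore includes $S$ in $\Splits_{i+1}$. The remaining case is when $S=\{\Taxa_i,\{z_{i+1}\}\}$, which is precisely the third bullet of the algorithm.

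The reverse inclusion is immediate: each of the three bullets only adds a split to $\Splits_{i+1}$ after explicitly verifying $\alpha_d>0$ on $\Taxa_{i+1}$, so every output split is a $d$-split. Thus $\Splits_{i+1}$ coincides with the set of $d$-splits on $\Taxa_{i+1}$, completing the induction.

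For the running time, I would invoke the known polynomial bound $|\Splits_i|=O(i^2)$ on the number of $d$-splits of an $i$-point dissimilarity from~\cite{SplitDecomposition}. Each inductive step then considers $O(i^2)$ candidate splits, and for each candidate computing the isolation index requires taking a minimum of $\tilde{\alpha}_d(x_1,y_1|x_2,y_2)$ over $O(i^4)$ quadruples, each of which is a constant-time arithmetic operation. Iterating over $i=1,\dots,n$ gives an overall polynomial running time. The only delicate point I anticipate is bookkeeping the edge case where a candidate extension has one part empty or coincides with a split already produced by the other extension branch; this is handled by treating $\{\Taxa_i,\{z_{i+1}\}\}$ separately, as the algorithm already does.
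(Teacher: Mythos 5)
Your proof is correct, but note that the paper does not actually prove Lemma~\ref{lemma:split-decomp}: it is stated as a black-box citation to~\cite{SplitDecomposition}, and the appendix contains no argument for it. What you have written is essentially a reconstruction of the standard Bandelt--Dress correctness argument for the incremental algorithm, and you have correctly isolated the one non-trivial ingredient, namely the monotonicity of the isolation index under restriction, $\alpha_d(S|_A)\geq\alpha_d(S)$, which holds because the minimum defining $\alpha_d(S|_A)$ ranges over a subset of the quadruples defining $\alpha_d(S)$. This gives exactly what the induction needs: every $d$-split of $\Taxa_{i+1}$ whose $z_{i+1}$-side is not the singleton restricts to a $d$-split of $\Taxa_i$, hence is rediscovered as one of the two tested extensions, while the singleton case is the third bullet; the converse inclusion is immediate since every candidate is vetted by an explicit $\alpha_d>0$ test on $\Taxa_{i+1}$. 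Your complexity accounting ($O(i^2)$ surviving splits by the weak-compatibility bound of~\cite{SplitDecomposition}, times $O(i^4)$ quadruples per isolation-index evaluation, summed over $i$) is also the standard one and matches the $O(n^6)$ figure the paper later uses in its running-time analysis. The only blemish is cosmetic and lies in the paper's pseudocode rather than in your argument: as written, the third bullet sits inside the loop over $S\in\Splits_i$, so when $\Splits_i=\emptyset$ (e.g., $i=1$) the trivial split $\{\Taxa_i,\{z_{i+1}\}\}$ would never be tested; your remark that this case must be ``treated separately'' is the right reading of the intended algorithm.
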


\paragraph{From small regions to full networks}
As in~\cite{TreeCase}, our reconstruction algorithm is composed of two main parts: Mini Reconstruction and Bipartition Extension. The purpose of Mini Reconstruction is to uncover the splits in ``small regions,'' for which we use the split decomposition method. In Bipartition Extension, each split found in a small region is extended to a split on all taxa by recursively adding taxa to the split. The main steps of the algorithm
are illustrated in~\Cref{NotationOverview}.
The input to the algorithm is a distorted metric of
a circular network $\Net$
together with bounds on $\Delta_\Net$, $\Omega_\Net$ and $\epsilon_\Net$. The output is a collection of weighted splits, that is, a split network.
\begin{figure}[!t]
  \centering
  	\includegraphics[scale=0.7]{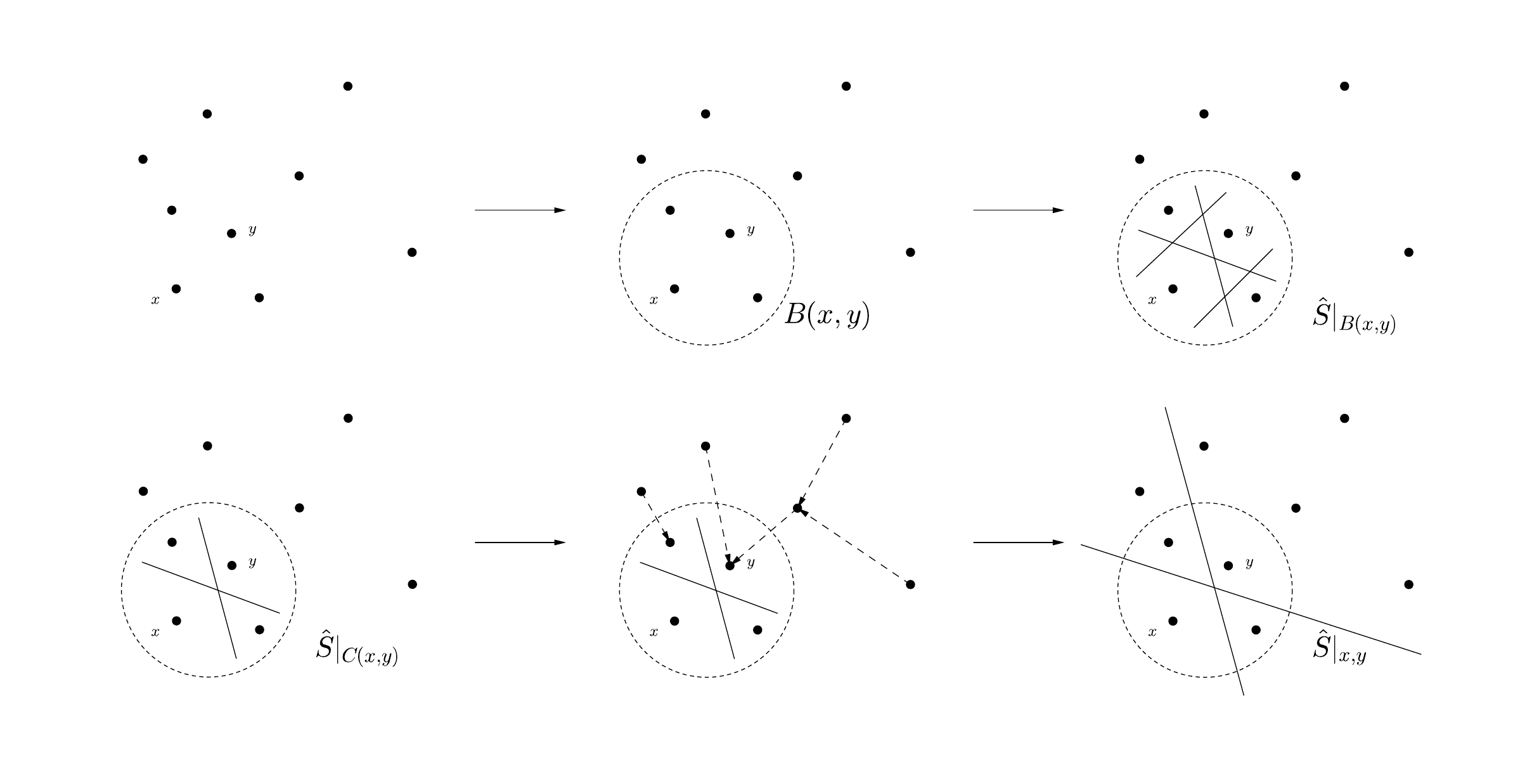}
	\caption{Illustration of the main steps of the reconstruction algorithm. On the top row, Mini Reconstruction uncovers all splits in a small region $B(x,y)$ around $x,y$. On the bottom row, those splits separating $x$ and $y$ in $B(x,y)$ are extended to full splits using Bipartition Extension.}
	\label{NotationOverview}
\end{figure}

\begin{algo}
\label{MainAlgorithm}
Network reconstruction\\
{\bf Main loop:}
\hrule
\noindent Input: $\Taxa$, $\tau$, $\Delta$, $\Omega$ and a $(\tau,R)$-distorted metric $\hat{d}$ with $R > 3\Delta + 7\Omega + 10\tau $\\
Output: A set of splits $\hat{\Splits}$ on $\Taxa $ and a weight function $\hat{w}: \hat{\Splits} \to (0, \infty)$
\hrule
\begin{enumerate}
\item Initially $\hat{\Splits} := \emptyset$.
\item Let EllipseRadius $:= \ellipseR$, ConnectingDistance $ := \connectD$.
\item For all pair taxa $x,y \in \Taxa$ satisfying $\hat{d}(x,y) \leq \ellipseC$:\\
\indent \hspace{7.5mm} $\SmallSplits$, $\hat{w}|_{C(x,y)}$ := MiniReconstruction($\Taxa$ ,$x$, $y$, $\tau$, EllipseRadius,  $\hat{d}$) \\
\indent \hspace{7.5mm} $\BigSplits$, $\hat{w}|_{x,y}$ := BipartitionExtension($\Taxa$, $\SmallSplits$, $\hat{w}|_{C(x,y)}$, ConnectingDistance, $\hat{d}$)\\
\indent\hspace{7.5mm} Set $\hat{\Splits} := \hat{\Splits}\cup\BigSplits$ and $w(\BigSplit) := \hat{w}|_{x,y}(\BigSplit)$ for any $\BigSplit \in \BigSplits$
\item Return $\hat{\Splits}$, $\hat{w}$.
\end{enumerate}

\noindent {\bf MiniReconstruction:}
\hrule
\noindent Input: $\Taxa$, $x$, $y$, $\tau$, EllipseRadius, $\hat{d}$\\
Output: A set of splits $\SmallSplits$ on $B(x,y)$ and a weight function $\hat{w}|_{C(x,y)}: \SmallSplits \to (0, \infty)$ 
\hrule
\begin{enumerate}
\item Set $B(x,y) := \{ z \in \Taxa: \hat{d}(z,x)+\hat{d}(z,y) \leq EllipseRadius \} $.
\item Apply the split decomposition method, \Cref{algo:SD}, to find all $\hat{d}$-splits on $B(x,y)$ with isolation indices greater than $2 \tau$. Denote that collection of splits by $\AllSmallSplits$ and let $\hat{w}|_{B(x,y)}: \AllSmallSplits \to (0, \infty)$ be the corresponding isolation indices.
\item Set $\SmallSplits := \{\SmallSplit \in \AllSmallSplits: \delta_{\SmallSplit}(x,y) = 1\}$ and for any $\SmallSplit\in\SmallSplits$, set $\hat{w}|_{C(x,y)}(\SmallSplit) := \hat{w}|_{B(x,y)}(\SmallSplit)$ .
\item Return $\SmallSplits$, $\hat{w}|_{C(x,y)}$.
\end{enumerate}

\noindent{\bf BipartitionExtension:}
\hrule
\noindent Input: $\Taxa$, $\SmallSplits$, $\hat{w}|_{C(x,y)}$, ConnectingDistance, $\hat{d}$ \\
Output: A set of splits $\BigSplits$ on $\Taxa$ and a weight function $\hat{w}|_{x,y}: \BigSplits \to (0, \infty)$ 
\hrule
\begin{enumerate}
\item Initially, set $\BigSplits = \emptyset$.
\item For all $S =\{S_1,S_2\} \in \SmallSplits$:\\
\indent \hspace{7.5mm} Set $w_0 := \hat{w}|_{C(x,y)}(S)$\\
\indent \hspace{7.5mm} While $\Taxa - (S_1\cup S_2) \neq \emptyset$:\\
\indent \hspace{15mm} Find $i \in{1, 2}$, $z \in S_i$, $z' \in \Taxa - (S_1\cup S_2)$ such that $\hat{d}(z,z') \leq $ ConnectingDistance\\
\indent \hspace{15mm} Set $S_i := S_i \cup \{z'\}$\\
\indent \hspace{7.5mm} Set $\BigSplits := \BigSplits \cup \{S\}$, $\hat{w}|_{x,y}(S) := w_0$ 
\item Return $\BigSplits$, $\hat{w}|_{x,y}$.
\end{enumerate}

\end{algo}


\section{Analysis}
\label{Analysis}

In this section, we prove Theorem~\ref{NetworkReconstructionLabel}.
In the remainder of this section, $\Network$ is a circular network with minimum weight $\epsilon$,
chord depth $\Delta$ and maximum incompatibility $\Omega$.
We assume that $\hat{d}$ is a $(\tau,R)$-distorted metric of $\Net$ with $\tau < \frac{1}{4}\epsilon$ and
$R > 3 \Delta + 7 \Omega + \frac{5}{2}\epsilon  > \M$.
For any $x, y \in \Taxa$ with $\hat{d}(x,y) \leq \ellipseC$, we let $B(x, y)$ be the ``small region'' 
\begin{eqnarray*}
	B(x,y) = \{ z \in \Taxa: \hat{d}(z,x)+\hat{d}(z,y) \leq \ellipseR \}.
\end{eqnarray*}
We denote by $\AllSmallSplits$ the set of all $\hat{d}$-split over $B(x,y)$ which are found via the split decomposition method to have isolation index larger than $2\tau$ in the Mini Reconstruction sub-routine of the algorithm. Let $\SmallSplits$ be the subset of $\AllSmallSplits$ containing those splits separating $x$ and $y$
\begin{eqnarray*}
\SmallSplits = \{ S \in \AllSmallSplits : \delta_{\SmallSplit}(x,y) = 1\}.
\end{eqnarray*}
The Bipartition Extension sub-routine extends each split in $\SmallSplits$ to a split $\BigSplit$ over all taxa, the collection of which we denote by $\BigSplits$. The algorithm outputs
\begin{eqnarray*}
\hat{\Splits} = \cup_{x,y\in\Taxa : \hat{d}(x,y) \leq \ellipseC} \:\BigSplits.
\end{eqnarray*}
See~\Cref{NotationOverview} for an illustration. 

To analyze the correctness of the reconstruction
algorithm, we also let
\begin{equation*}
	\AllSmallSplitsO = \{ \{S_1\cap B(x,y), S_2 \cap B(x,y)\} : S = \{S_1, S_2\} \in \Splits, S_1\cap B(x,y) \neq \emptyset, S_2 \cap B(x,y) \neq \emptyset\},
\end{equation*}
and
\begin{equation*}
	\SmallSplitsO = \{ \{S_x\cap B(x,y), S_y \cap B(x,y)\} : S = \{S_x, S_y\} \in \BigSplitsO, S_x\cap B(x,y) \neq \emptyset, S_y \cap B(x,y) \neq \emptyset\},
\end{equation*}
where recall that 
$\BigSplitsO$ is the set of splits separating $x$ and $y$ in $\Net$.
We will establish the following claims:
\begin{enumerate}[label=(\Alph*)]
\item Mini Reconstruction correctly reconstructs the splits restricted to $B(x,y)$.
\begin{prop}[Correctness of Mini Reconstruction]
	\label{prop:mini-correct}
For all $x,y\in\Taxa$ satisfying $\hat{d}(x,y) \leq \ellipseC$, we have $\AllSmallSplits = \AllSmallSplitsO$.
\end{prop}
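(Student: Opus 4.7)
My plan is to reduce the claim to Lemma \ref{lemma:dsplits-weakcomp} applied to $d|_{B(x,y)}$, using the distortion hypothesis to show that the isolation indices $\alpha_d$ and $\alpha_{\hat d}$ differ by less than $2\tau$ on every split of $B(x,y)$. Since the minimum split weight of $\Net$ satisfies $\epsilon > 4\tau$, the threshold $2\tau$ in MiniReconstruction then separates splits with $\alpha_d = 0$ (spurious) from those with $\alpha_d \geq \epsilon$ (genuine), which will give $\AllSmallSplits = \AllSmallSplitsO$ exactly.

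The main technical step, and the one that drives the accuracy radius appearing in Theorem \ref{NetworkReconstructionLabel}, is to show that $\hat d$ is close to $d$ on every pair in $B(x,y)$. For any $z \in B(x,y)$ the ellipse condition gives $\hat d(z,x), \hat d(z,y) \leq \ellipseR < R+\tau$, so the distortion hypothesis applies termwise and yields $d(z,x) + d(z,y) < \ellipseR + 2\tau$. Using that $d$ (unlike $\hat d$) is a genuine metric, I would then average the two triangle inequalities $d(z_1,z_2) \leq d(z_1,x) + d(z_2,x)$ and $d(z_1,z_2) \leq d(z_1,y) + d(z_2,y)$ to obtain, for all $z_1,z_2 \in B(x,y)$,
\begin{equation*}
d(z_1,z_2) \;<\; \ellipseR + 2\tau \;=\; \M \;<\; R + \tau.
\end{equation*}
Applying the distortion hypothesis to each such pair then gives $|d - \hat d| < \tau$ throughout $B(x,y) \times B(x,y)$. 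This ``ellipse plus averaging'' trick is the crucial bookkeeping point: a naive triangle inequality from the diameter of $B(x,y)$ would roughly double the constant and force a much larger $R$.

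Given this, the rest is routine. First, $d|_{B(x,y)}$ is itself the metric of a circular network on $B(x,y)$: restricting a circular ordering of $\Taxa$ to $B(x,y)$ yields a circular ordering, each split of $\Net$ restricts to a split of $B(x,y)$ (when both sides are nonempty), and two taxa in $B(x,y)$ are separated by $S \in \Splits$ iff they are separated by its restriction. Hence $d|_{B(x,y)}$ coincides with the metric of $(B(x,y), \AllSmallSplitsO, w')$, where $w'(\bar S)$ sums the weights of all splits in $\Splits$ sharing the restriction $\bar S$, and Lemma \ref{lemma:dsplits-weakcomp} identifies $\AllSmallSplitsO$ with the set of $d$-splits of $B(x,y)$, each of isolation index $w'(\bar S) \geq \epsilon$, while every other split $S$ of $B(x,y)$ has $\alpha_d(S) = 0$. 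Second, $\tilde\alpha$ is a half-difference of sums of two pairwise distances, so the bound $|d - \hat d| < \tau$ on $B(x,y)$ propagates to $|\tilde\alpha_d - \tilde\alpha_{\hat d}| < 2\tau$ on every quadruple in $B(x,y)$, and taking the minimum yields $|\alpha_d(S) - \alpha_{\hat d}(S)| < 2\tau$ for every split $S$ of $B(x,y)$. Combined with the dichotomy $\alpha_d(S) \in \{0\} \cup [\epsilon, \infty)$ and with $\epsilon > 4\tau$, this shows that $\alpha_{\hat d}(S) > 2\tau$ iff $S \in \AllSmallSplitsO$, which by Lemma \ref{lemma:split-decomp} is exactly the selection rule of MiniReconstruction. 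Hence $\AllSmallSplits = \AllSmallSplitsO$, as required.
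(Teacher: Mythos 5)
Your proposal is correct and follows essentially the same route as the paper's proof: the ellipse-plus-averaging bound on the diameter of $B(x,y)$, the identification of $d|_{B(x,y)}$ with the metric of the restricted circular network via Lemma~\ref{lemma:dsplits-weakcomp}, the propagation of the $\tau$-accuracy to a $2\tau$ bound on isolation indices, and the threshold dichotomy using $\epsilon > 4\tau$. No gaps.
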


\item Bipartition Extension correctly extends 
the splits separating $x$ and $y$ in $B(x,y)$
to all of $\Taxa$.
\begin{prop}[Correctness of Bipartition Extension]
	\label{prop:ext-correct}
For all $x,y\in\Taxa$ satisfying $\hat{d}(x,y) \leq \ellipseC$, we have $\BigSplits = \BigSplitsO$.
\end{prop}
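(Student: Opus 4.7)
The argument rests on a natural bijection $\Phi : \BigSplitsO \to \SmallSplits$ and the claim that \textbf{BipartitionExtension} computes $\Phi^{-1}$. By Proposition~\ref{prop:mini-correct}, $\SmallSplits = \SmallSplitsO$, which is by definition the collection of restrictions of splits in $\BigSplitsO$ to $B(x,y)$. These restrictions are nontrivial because $x, y \in B(x,y)$ sit on opposite sides of every $\tilde{S} \in \BigSplitsO$. A short separate argument shows that the restriction map is injective on $\BigSplitsO$, leveraging that the ellipse condition $\hat{d}(x,y) \leq \ellipseC$ together with the large $\ellipseR$ forces $B(x,y)$ to meet both sides of any such $\tilde{S}$ densely enough---via chord-depth witnesses---to pin $\tilde{S}$ down uniquely. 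This yields $\Phi$, and the goal becomes verifying, by induction on the while-loop iteration count, that the running pair $(S_1, S_2)$ produced by \textbf{BipartitionExtension} seeded with $S \in \SmallSplits$ satisfies $S_i \subseteq \tilde{S}_i$ for $\tilde{S} = \Phi^{-1}(S)$, with the loop never getting stuck before $S_1 \cup S_2 = \Taxa$.

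The induction step splits into \emph{soundness}---if the algorithm adds $z'$ to $S_i$ via a witness $z \in S_i$ with $\hat{d}(z,z') \leq \connectD$, then $z' \in \tilde{S}_i$---and \emph{progress}---such a witness pair always exists while unplaced taxa remain. For soundness I would argue by contradiction. Suppose $\tilde{S}$ separates $z$ from $z'$. Decomposing
\[
d(z,z') = d(z,z';\Comp(\tilde{S})) + d(z,z';\Inco(\tilde{S})),
\]
the incompatible portion is at most $\Omega(\tilde{S}) \leq \Omega$, while the chord-depth bound applied to the separated pair $(z,z')$ gives $d(z,z';\Comp(\tilde{S})) \geq \Delta(\tilde{S}) \geq w(\tilde{S}) \geq \epsilon$. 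This alone does not contradict $d(z,z') \leq \hat{d}(z,z') + \tau \leq \connectD + \tau = \Delta + 2\Omega + 2\tau$, so one must sharpen the lower bound on $d(z,z';\Comp(\tilde{S}))$. This is precisely where the \emph{compatible chain}---a sequence of taxa whose consecutive separating splits lie entirely in $\Comp(\tilde{S})$---enters: any such chain connecting $z$ to $z'$ must traverse $\tilde{S}$ via one of its chords, and a careful accounting of the weight of $\tilde{S}$ together with the residual weight accumulated along the chord should force $d(z,z';\Comp(\tilde{S})) > \Delta + \Omega + 2\tau$, yielding the desired contradiction once the bound on $d(z,z';\Inco(\tilde{S}))$ is folded in. Setting up the compatible-chain machinery and proving its traversal lower bound is the principal obstacle of the proof, since this is where the unique-tree-path intuition of~\cite{TreeCase} must be genuinely rebuilt in a non-tree setting.

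For progress I would use the chord-depth witnesses more directly. The pair $x^*, y^*$ on opposite sides of $\tilde{S}$ with $d(x^*, y^*;\Comp(\tilde{S})) \leq \Delta$ satisfies $d(x^*, y^*) \leq \Delta + \Omega$. A second induction on the number of unplaced taxa, chaining together chord witnesses of nearby splits via the triangle inequality and the distortion bound, shows that every taxon on side $\tilde{S}_i$ lies within $\connectD$ in $\hat{d}$ of some previously placed taxon on the same side, the slack in $\connectD = \Delta + 2\Omega + \tau$ absorbing the accumulated weights and distortions. Combined with soundness and the correct initial configuration $S_i = \tilde{S}_i \cap B(x,y)$ supplied by Proposition~\ref{prop:mini-correct}, this yields $\BigSplits = \BigSplitsO$.
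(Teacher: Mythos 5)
Your overall architecture (restriction bijection, induction on the loop with a soundness step and a progress step) matches the paper's, and your progress step is essentially the paper's use of the hoppability lemma (Lemma~\ref{lemma:hoppability}: $\Net$ is $(\Delta+2\Omega)$-hoppable, so each required hop has $\hat{d} \leq \connectD$). The gap is in soundness, and it is exactly at the point you flag as ``the principal obstacle.'' The bound you hope to prove there --- that $d(z,z';\Comp(\tilde{S})) > \Delta+\Omega+2\tau$ for any pair $z,z'$ arising in the loop and separated by $\tilde{S}$, via a compatible-chain traversal argument --- is false as a statement about arbitrary separated pairs: the $\tilde{S}$-witnesses $u_x,u_y$ guaranteed by the definition of chord depth satisfy $d(u_x,u_y;\Comp(\tilde{S})) \leq \Delta$ and $d(u_x,u_y) \leq \Delta+\Omega < \connectD$, so there genuinely exist pairs straddling $\tilde{S}$ that are within the connecting distance. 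No accounting of weights along a chain can rule this out. Moreover, the compatible chain is the wrong tool here: in the paper it serves the \emph{progress} half (it underlies Lemma~\ref{lemma:hoppability}), not soundness.

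What actually makes soundness work is the hypothesis your sketch never uses: the newly added taxon $z'$ lies \emph{outside} $B(x,y)$, hence $\hat{d}(z',x)+\hat{d}(z',y) > \ellipseR$ and so $d(z',x)+d(z',y) > 3\Delta+7\Omega+6\tau$. The paper then applies the four-point inequality of Lemma~\ref{OnlyOneSideLemma} to the quartet $\{z',x\}$ versus $\{z,y\}$ (the split $\tilde{S}$ separates these two pairs, so the ``crossing'' weight $w_1$ is at most $\Omega$), obtaining
\[
d(z',z) \;\geq\; \tfrac{1}{2}\bigl(d(z',x)+d(z',y)-d(x,y)\bigr)-\Omega \;>\; \Delta+2\Omega+2\tau,
\]
using $d(x,y) \leq \Delta+\Omega+2\tau$ from the ellipse condition on $(x,y)$. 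This is Lemma~\ref{OtherSideIsFar}, and it immediately forces $\hat{d}(z',z) > \connectD$, so the first wrongly placed taxon cannot exist; the same contradiction also yields uniqueness of the extension, which you treat as a separate injectivity claim. To repair your proof you would need to replace the chain-traversal lower bound with this ellipse-plus-quartet argument; as written, the soundness step does not go through.
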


\item All splits are reconstructed. 
\begin{prop}[Exhaustivity]
	\label{prop:exhaust}
$\Splits = \cup_{x,y\in\Taxa : \hat{d}(x,y) \leq \ellipseC}\:\BigSplitsO$, so  $\hat{\Splits} = \Splits$.
\end{prop}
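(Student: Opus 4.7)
\medskip

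\noindent\textbf{Proof proposal for Proposition~\ref{prop:exhaust}.}
The second equality $\hat{\Splits} = \Splits$ follows from the first together with Proposition~\ref{prop:ext-correct}, so the plan is to establish $\Splits = \cup_{x,y: \hat{d}(x,y) \leq \ellipseC} \BigSplitsO$. The inclusion $\supseteq$ is immediate: by definition $\BigSplitsO \subseteq \Splits$ for every choice of $x,y$. The work lies in showing $\subseteq$, i.e.\ that every $S \in \Splits$ is separated by some pair $(x,y)$ with $\hat{d}(x,y) \leq \ellipseC = \Delta + \Omega + \tau$.

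The key idea is to exploit the chord depth to locate a good pair and then control the extra contribution coming from incompatible splits. Fix $S \in \Splits$. By the definition of $\Delta_\Net(S)$, and since $\Delta_\Net(S) \leq \Delta$, there exist $x, y \in \Taxa$ with $S \in \Splits|_{x,y}$ such that
\begin{equation*}
d(x,y;\Comp_\Net(S)) \leq \Delta.
\end{equation*}
Because every split in $\Splits$ is either compatible or incompatible with $S$, and these two sets partition $\Splits$, we can decompose
\begin{equation*}
d(x,y) = d(x,y;\Comp_\Net(S)) + d(x,y;\Inco_\Net(S)) \leq \Delta + \Omega_\Net(S) \leq \Delta + \Omega,
\end{equation*}
where the middle inequality uses $d(x,y;\Inco_\Net(S)) \leq \sum_{S' \in \Inco(S)} w(S') = \Omega_\Net(S)$.

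Next, I want to transfer this bound to the noisy distance $\hat{d}$. Since $d(x,y) \leq \Delta + \Omega < R + \tau$ (as $R > 3\Delta + 7\Omega + \tfrac52 \epsilon > \Delta + \Omega$), the distorted metric property gives $|d(x,y) - \hat{d}(x,y)| < \tau$, and hence
\begin{equation*}
\hat{d}(x,y) < d(x,y) + \tau \leq \Delta + \Omega + \tau = \ellipseC.
\end{equation*}
So the pair $(x,y)$ is among those processed by the main loop of Algorithm~\ref{MainAlgorithm}. By Proposition~\ref{prop:ext-correct}, $\BigSplits = \BigSplitsO$ for this pair, and since $S$ separates $x$ from $y$ we have $S \in \BigSplitsO$. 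This exhibits $S$ in the union, completing the proof of the first identity. The identity $\hat{\Splits} = \Splits$ then follows by combining Proposition~\ref{prop:ext-correct} with the definition of $\hat{\Splits}$.

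The only potentially delicate point is making sure the two contributions $\Delta$ and $\Omega$ are genuinely additive rather than double counted; this is precisely why the decomposition of $\Splits$ into $\Comp(S) \sqcup \Inco(S)$ (with $S$ itself lying in $\Comp(S)$) is the right framework. Everything else is a direct invocation of Propositions~\ref{prop:mini-correct} and~\ref{prop:ext-correct} together with the defining property of the distorted metric.
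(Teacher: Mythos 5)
Your proof is correct and follows essentially the same route as the paper: the paper's one-line proof cites Lemma~\ref{WitnessOfDepth} (witnesses with $d(x,y)\leq\Delta+\Omega$) together with $\Delta+\Omega<R$, and you have simply inlined the content of that lemma (the decomposition of $d(x,y)$ over $\Comp(S)\sqcup\Inco(S)$) before applying the distorted-metric bound and Proposition~\ref{prop:ext-correct}.
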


\item Isolation indices computed by the split decomposition method are good estimates of split weights.
\begin{prop}[Weight estimates]
	\label{prop:weight-estimates}
	For any $S \in \BigSplits$, if $S'$ is the corresponding split on $B(x,y)$, then we have $|w(S) - \alpha_{\hat{d}}(S')| < 2\tau$.
\end{prop}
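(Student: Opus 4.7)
The plan is to reduce Proposition~\ref{prop:weight-estimates} to two ingredients: (i) the noiseless identity $\alpha_d(S') = w(S)$, where $d$ is the true metric restricted to $B(x,y)$; and (ii) a stability estimate $|\alpha_{\hat{d}}(S') - \alpha_d(S')| < 2\tau$ obtained by propagating the distortion through the definition of the isolation index.

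For (i), I would use Propositions~\ref{prop:mini-correct} and~\ref{prop:ext-correct} together with the circular structure. These propositions already set up a bijection between $\SmallSplits$ and $\BigSplits$, so the corresponding split is precisely $S' = \{S_x \cap B(x,y),\, S_y \cap B(x,y)\}$, and moreover no other split of $\Net$ collapses onto $S'$ upon restriction: splits of $\Net$ that do not separate $x,y$ leave $x,y$ on the same side of their restriction, while two distinct splits in $\BigSplitsO$ cannot share the same restriction without contradicting $\BigSplits = \BigSplitsO$. Consequently, the restricted network $\Net|_{B(x,y)}$ is a circular split system with associated metric $d|_{B(x,y)}$ whose weight on $S'$ equals $w(S)$, and Lemma~\ref{lemma:dsplits-weakcomp} applied to this restricted network yields $\alpha_d(S') = w(S)$.

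For (ii), the first step is to establish uniform closeness $|\hat{d}(z,z') - d(z,z')| < \tau$ for every $z, z' \in B(x,y)$. By the ellipse definition, each of $\hat{d}(z,x), \hat{d}(z,y), \hat{d}(z',x), \hat{d}(z',y)$ is at most $\ellipseR < R + \tau$, so the $(\tau,R)$-distortion bound applies individually and gives the same inequalities for $d$ up to $\tau$. Adding the two ellipse constraints and invoking the triangle inequality for the true metric $d$ yields $d(z,z') \leq \min(d(z,x)+d(z',x),\, d(z,y)+d(z',y)) \leq \ellipseR + 2\tau = \M$; since $R > \M$ and hence $\M < R + \tau$, the distortion hypothesis applies to the pair $(z,z')$ as well and delivers the desired uniform bound.

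With this bound in hand, I would propagate it through the isolation index formula. For each 4-tuple $(x_1, y_1, x_2, y_2) \in S'_1 \times S'_1 \times S'_2 \times S'_2 \subseteq B(x,y)^4$, each of the three candidate sums inside the maximum defining $\tilde{\alpha}$ differs by less than $2\tau$ between $d$ and $\hat{d}$; the maximum of three quantities perturbed by less than $2\tau$ is itself perturbed by less than $2\tau$; the subtracted term $d(x_1,y_1) + d(x_2,y_2)$ shifts by less than $2\tau$ as well; and halving gives $|\tilde{\alpha}_{\hat{d}} - \tilde{\alpha}_d| < 2\tau$. Taking the minimum over 4-tuples preserves this bound, producing $|\alpha_{\hat{d}}(S') - \alpha_d(S')| < 2\tau$, which combined with (i) completes the argument. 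The main subtlety is step (i), specifically ruling out that the restriction to $B(x,y)$ collapses distinct true splits of $\Net$ onto the same restricted split; everything downstream is a routine propagation of the $(\tau,R)$-distortion through the arithmetic of the isolation index.
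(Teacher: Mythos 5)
Your proof is correct and follows essentially the same route as the paper's: the paper likewise combines $\alpha_d(S') = w|_{B(x,y)}(S') = w(S)$ (via uniqueness of the extension to $\Taxa$, drawn from the proof of Proposition~\ref{prop:ext-correct}) with the $2\tau$-stability of the isolation index under the distortion (drawn from the proof of Proposition~\ref{prop:mini-correct}). The only cosmetic difference is that you re-derive these two ingredients explicitly --- including a cardinality argument for why restriction to $B(x,y)$ cannot collapse distinct splits --- whereas the paper simply cites the earlier proofs.
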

\end{enumerate}


\subsection{Key distance lemmas}

We begin with a few structural results
that will play a key role in the proof. The proofs are 
in Section~\ref{sec:appendix}.

\paragraph{Witnesses.} 
Recall from~\eqref{eq:restricted} the definition
of the restricted distance and from~\eqref{eq:def-compatible} and~\eqref{eq:def-incompatible} the definitions
of the sets of splits compatible or incompatible with a given split.
For $S \in \Splits$,
we refer to a pair of taxa $x,y$ such that $S \in \Splits|_{x,y}$ and
$d(x,y;\Comp(S)) \leq \Delta$ as {\bf $S$-witnesses}. 
The next lemma establishes the existence of witnesses and gives a bound on the distance between
them. 
\begin{lemma}[Witnesses]
	\label{WitnessOfDepth}
	Let $\Network$ be a split
	network with chord depth $\Delta$ and
	maximum incompatibility $\Omega$.
	For all split $S\in\Splits$, there exists 
	a pair $x,y\in \Taxa$ of $S$-witnesses. Moreover, 
	for all such pairs, $d(x,y)\leq\Delta+\Omega$.
\end{lemma}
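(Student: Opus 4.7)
The plan is to prove the two statements by directly unpacking the definitions of chord depth and maximum incompatibility, with no clever constructions needed.

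For existence, I would argue that $\Delta_\Net(S)$ is by definition a minimum of $d(x,y;\Comp(S))$ over the set of pairs $(x,y) \in \Taxa \times \Taxa$ with $S \in \Splits|_{x,y}$. This set is finite and non-empty: since $S = \{S_1,S_2\}$ is a split, both parts are non-empty, so picking any $x \in S_1$ and $y \in S_2$ gives $S \in \Splits|_{x,y}$. Hence the minimum is attained by some pair $x^*, y^*$, and by definition of $\Delta = \max_{S'} \Delta_\Net(S')$ we have
\[
d(x^*,y^*;\Comp(S)) = \Delta_\Net(S) \leq \Delta,
\]
so $x^*, y^*$ are $S$-witnesses.

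For the distance bound, I would fix any $S$-witnesses $x,y$ and decompose the total distance according to whether a separating split is compatible with $S$ or not. Since compatibility is a dichotomy ($\Comp(S)$ and $\Inco(S)$ partition $\Splits$), we obtain
\[
d(x,y) \;=\; \sum_{S' \in \Splits|_{x,y}} w(S')
\;=\; d(x,y;\Comp(S)) + d(x,y;\Inco(S)).
\]
The first summand is at most $\Delta$ by the witness hypothesis. For the second summand, I would observe that restricting the sum in $\Omega_\Net(S) = \sum_{S' \in \Inco(S)} w(S')$ to those incompatible splits that moreover separate $x$ and $y$ can only decrease it, so
\[
d(x,y;\Inco(S)) \;\leq\; \sum_{S' \in \Inco(S)} w(S') \;=\; \Omega_\Net(S) \;\leq\; \Omega.
\]
Adding the two bounds yields $d(x,y) \leq \Delta + \Omega$.

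I do not expect any serious obstacle here: the lemma is essentially a bookkeeping exercise on the definitions. The only point worth being careful about is that $S$ itself lies in $\Comp(S)$ (since $S_1 \cap S_2 = \emptyset$), so the split $S$ itself contributes its weight to $d(x,y;\Comp(S))$ rather than to $d(x,y;\Inco(S))$—this is consistent with the decomposition above but is worth noting so that no weights are double-counted.
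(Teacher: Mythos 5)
Your proof is correct and follows essentially the same route as the paper's: existence of witnesses falls directly out of the definition of chord depth, and the bound comes from splitting $d(x,y)$ into the $\Comp(S)$ and $\Inco(S)$ contributions, each bounded by $\Delta$ and $\Omega$ respectively. Your side remark that $S \sim S$ (so $S$ contributes to the compatible part) is a correct and harmless clarification that the paper leaves implicit.
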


\paragraph{Hoppability.}
For $\kappa > 0$, we say that $x, y\in \Taxa$ are {\bf $\kappa$-hoppable} if there exists a sequence of taxa $\{z_i\}_{i=0}^{\ell+1} \subseteq \Taxa$ such that $z_0 = x$, $z_{\ell+1} = y$ and, for any $0\leq i\leq \ell$, we have $d(z_i, z_{i+1}) \leq \kappa$.
In that case, we write $\connects{x}{y}{\kappa}$
and we refer to the pairs $(z_i,z_{i+1})$ as {\bf $\kappa$-hops}.
We say that $\Net$ is {\bf $\kappa$-hoppable} if $\connects{x}{y}{\kappa}$ for $\forall x, y\in \Taxa$. 
Our goal is to establish $\kappa$-hoppability 
for the smallest possible $\kappa$. This is
the most involved step of the proof.
\begin{lemma}[Hoppability]
	\label{lemma:hoppability}
	\label{connectable}
	Let $\Network$ be a split
	network with chord depth $\Delta$ and
	maximum incompatibility $\Omega$. Then
	$\Net$ is $\kappa$-hoppable
	with $\kappa \geq \Delta + 2\Omega$.
\end{lemma}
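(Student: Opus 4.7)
My approach is a proof by contradiction combined with a circular-ordering analysis of consecutive taxa. Suppose $\Net$ fails to be $\kappa$-hoppable for $\kappa = \Delta + 2\Omega$; then the graph $G_\kappa$ on $\Taxa$ with edge set $\{(u,v):d(u,v)\leq\kappa\}$ is disconnected, yielding a nontrivial partition $\Taxa = A \sqcup B$ with $d(a,b) > \kappa$ for every $a \in A$ and $b \in B$. Lemma~\ref{WitnessOfDepth} furnishes, for each split $S \in \Splits$, a witness pair at distance at most $\Delta+\Omega<\kappa$, so no witness pair can straddle the $A$-$B$ gap; in particular $\{A,B\}$ cannot itself be a split of $\Net$.

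Since $A$ and $B$ are both nonempty, traversing the circular ordering of $\Taxa$ one must encounter consecutive taxa $x_i \in A$ and $x_{i+1} \in B$, and by hypothesis $d(x_i,x_{i+1})>\kappa$. The first structural observation I would establish is that every split in $\Splits|_{x_i,x_{i+1}}$ must have one of its two cyclic boundaries at the gap between positions $i$ and $i+1$; consequently these splits are pairwise compatible and linearly nested, forming a chain $S^{(1)}, S^{(2)}, \ldots, S^{(m)}$ whose total weight equals $d(x_i,x_{i+1})$.

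The second step accounts for this chain weight using Lemma~\ref{WitnessOfDepth}. Applying that lemma to $S^{(1)}$ yields a witness pair $(u_1,v_1)$ with $d(u_1,v_1;\Comp(S^{(1)})) \leq \Delta$; since each $S^{(k)}$ lies in $\Comp(S^{(1)})$, the chain splits that separate $(u_1,v_1)$ contribute at most $\Delta$ in total, and a symmetric statement holds for a witness $(u_m,v_m)$ of $S^{(m)}$. Using the gap hypothesis to pin down where each witness endpoint can lie on the circle, I would argue that the innermost and outermost witnesses together separate every $S^{(k)}$, so they jointly bound $\sum_{k=1}^m w(S^{(k)})$ by $\Delta$ once (for the chord-depth cost) plus $2\Omega$ (for the two incompatibility slacks, one per witness), giving $d(x_i,x_{i+1}) \leq \Delta+2\Omega=\kappa$, a contradiction.

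The main obstacle is the coverage portion of the last step: ensuring that the two witness pairs leave no ``middle gap'' in the chain, and isolating exactly why the incompatibility contributions sum to $2\Omega$ and no more. This is precisely where the compatible-chain formalism announced in the introduction is indispensable, since it organizes the interplay between nested compatible splits (accounted via chord depth) and transversely incompatible splits (accounted via $\Omega$); it constitutes the novel circular-network step beyond the tree-case argument of~\cite{TreeCase}.
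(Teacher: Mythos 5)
Your setup is reasonable and your first structural observation is correct: if $G_\kappa$ is disconnected into $A\sqcup B$, some circularly adjacent pair $x_i\in A$, $x_{i+1}\in B$ has $d(x_i,x_{i+1})>\kappa$, and every split in $\Splits|_{x_i,x_{i+1}}$ has an arc boundary at the gap between positions $i$ and $i+1$, so these splits are pairwise compatible and nested, with total weight $d(x_i,x_{i+1})$. This is in fact a clean special case of the paper's ``maximal compatible chain'' (for adjacent taxa maximality is automatic). But the step you flag as the ``main obstacle'' is a genuine gap, not a technicality, and as stated it fails. A witness pair $(u_1,v_1)$ of the outermost split $S^{(1)}$ only controls the chain splits that actually separate $u_1$ and $v_1$, i.e.\ an initial segment $S^{(1)},\dots,S^{(k_1)}$ of the nest, and symmetrically for $(u_m,v_m)$ and a final segment; nothing forces these segments to meet. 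Absent the disconnection hypothesis the claim ``$d(x_i,x_{i+1})\leq\Delta+2\Omega$ for circularly adjacent taxa'' is simply false: in a caterpillar tree ($\Omega=0$, $\Delta$ bounded) the two end leaves are adjacent in the circular ordering yet lie at distance equal to the diameter. So the middle of the chain must be controlled, and two witness pairs cannot do it. Moreover, even granting full coverage, your accounting gives $2(\Delta+\Omega)$ (each witness pair costs up to $\Delta+\Omega$ by Lemma~\ref{WitnessOfDepth}), not $\Delta+2\Omega$.

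The paper closes exactly this gap by taking a witness pair $u_x^{(k)},u_y^{(k)}$ for \emph{every} split in the chain and proving (Lemma~\ref{lemma:chain-hop}) that for consecutive splits at least one of $d(u_x^{(k)},u_x^{(k+1)})$, $d(u_y^{(k)},u_y^{(k+1)})$ is at most $\kappa$; this is obtained by showing the sum of the two is at most $2\Delta+4\Omega$ via a delicate decomposition of the separating splits into compatible/incompatible and separating/non-separating classes (Claims~\ref{claim:1}--\ref{claim:3}). One then walks from $x$ to $y$ along the witness pairs, backtracking across a witness pair (cost $\Delta+\Omega\leq\kappa$) when the short hop is on the $x$-side. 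If you want to salvage your contradiction framing, note that under the disconnection hypothesis each witness pair lies entirely in $A$ or entirely in $B$, the first can be anchored at $x_i\in A$ and the last at $x_{i+1}\in B$, so some consecutive pair of witness pairs straddles the cut---and the $\kappa$ bound on consecutive witnesses yields the contradiction. But that bound is precisely the hard content your proposal leaves unproved.
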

\noindent 
The following example
shows that the constants in Lemma~\ref{lemma:hoppability} are tight. 
\begin{example}[Tightness of $\Omega$ factor]
	\label{ex:omega-tight}
	By definition of the chord depth, $\kappa$ in Lemma~\ref{lemma:hoppability} must be at least $\Delta$. Indeed, otherwise, the split achieving $\Delta$ cannnot be crossed.
	The following example shows that the constant factor in the $\Omega$-term is also tight. Consider the graph in~\Cref{ConnectThreshold}.
	\begin{figure}[p]
		\centering
		\includegraphics[scale=0.8]{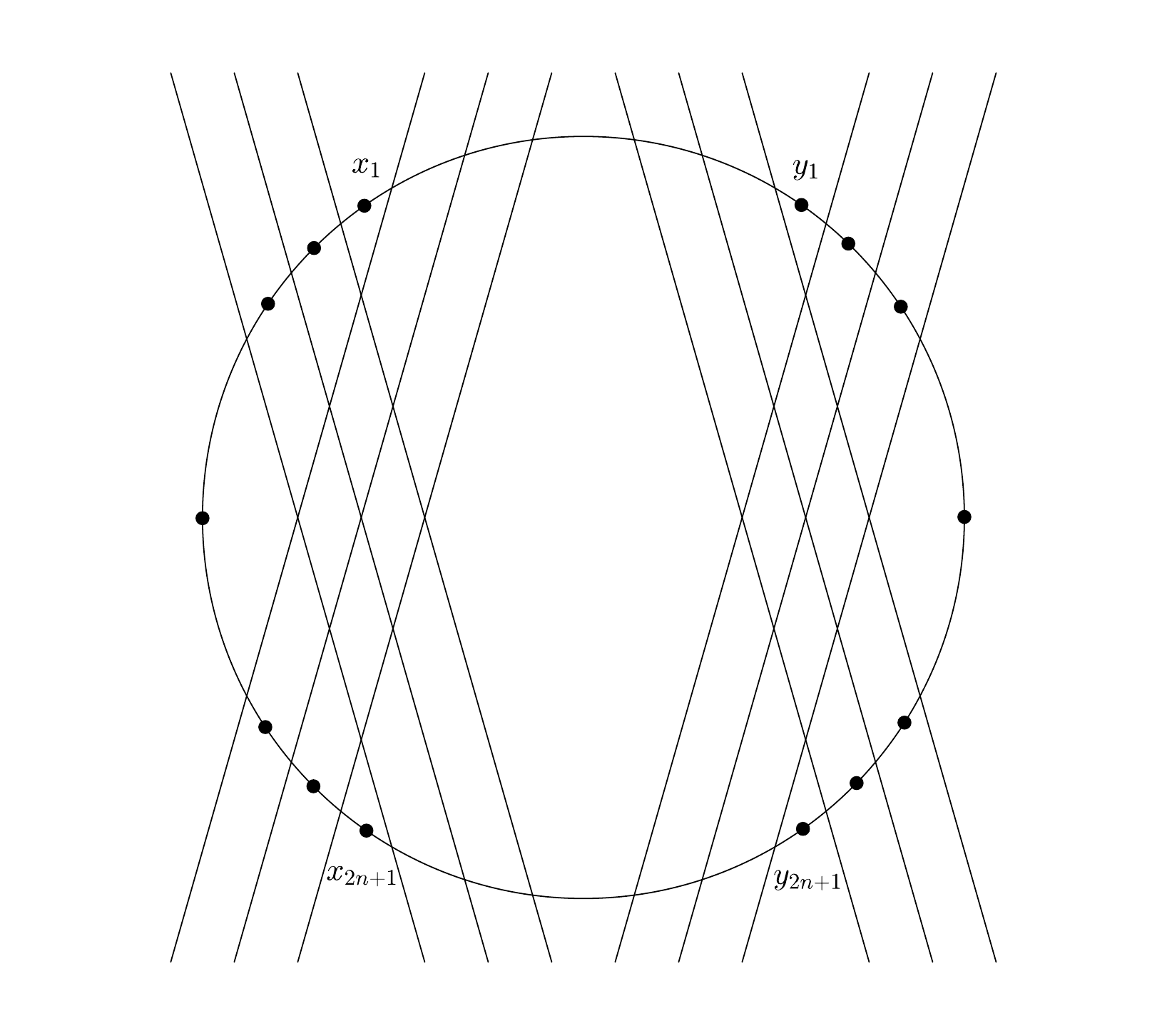}
		\caption{Example showing tightness of the factor 2 in the hoppability bound.}
		\label{ConnectThreshold}
	\end{figure}
	In this graph, every vertex denotes a taxon and every line denotes a split of weight 1. In this split network, $\Delta = 1$, $\Omega = 3$, and the smallest distance between those taxa on the left and those taxa on the right is 6. We can generalize this network by increasing the number of splits in the 4 subsets of parallel splits from $3$ to $n$ each. Then, $\Delta$ is still 1, while $\Omega$ is $n$, and the smallest distance between the taxa on the left and the taxa on the right becomes $2n$.	
\end{example}

\paragraph{Bounding the distance between taxa separated by a split.} The following bound will be useful.
\begin{lemma}[A distance bound]
\label{OnlyOneSideLemma}
Let $\Network$ be a split
network with chord depth $\Delta$ and
maximum incompatibility $\Omega$. 
Suppose $x,y, z, z'\in \Taxa$. If there exists a split $S\in\Splits$ that separates $\{z, x\}$ and $\{z', y\}$, then $d(z,x)+d(z,y)\leq 2d(z,z')+d(x,y)+2\Omega$ or
$$
d(z,z') \geq \frac{1}{2}\left[
d(z,x)+d(z,y) -d(x,y) 
\right] - \Omega.
$$
\end{lemma}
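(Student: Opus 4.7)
The plan is to expand everything in terms of the split-metric representation $d(\cdot,\cdot) = \sum_{S'} w(S')\,\delta_{S'}(\cdot,\cdot)$ and track contributions split by split.

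First, I would examine the quantity $\frac{1}{2}[d(z,x)+d(z,y)-d(x,y)]$. For a fixed split $S'$, let $a = \delta_{S'}(z,x)$, $b = \delta_{S'}(z,y)$, $c = \delta_{S'}(x,y)$. A case analysis over the partition of $\{z,x,y\}$ induced by $S'$ shows that $a+b-c \in \{0,2\}$, with value $2$ exactly when $z$ lies alone on its side of $S'$, i.e.\ when $S'$ separates $z$ from both $x$ and $y$. Setting
\begin{equation*}
\mathcal{A} = \{S' \in \Splits : S' \in \Splits|_{z,x} \cap \Splits|_{z,y},\ \delta_{S'}(x,y) = 0\},
\end{equation*}
this yields the identity $\frac{1}{2}[d(z,x)+d(z,y)-d(x,y)] = \sum_{S' \in \mathcal{A}} w(S')$.

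Next I would compare $\mathcal{A}$ with $\Splits|_{z,z'}$. For $S' \in \mathcal{A}$, the taxon $z'$ lies either on the same side as $z$ or on the same side as $x,y$. In the second case $S' \in \Splits|_{z,z'}$, so these splits already contribute to $d(z,z')$. The leftover set $\mathcal{A} \setminus \Splits|_{z,z'}$ consists exactly of splits that separate $\{z,z'\}$ from $\{x,y\}$. This reduces the lemma to showing
\begin{equation*}
\sum_{S' \in \mathcal{A} \setminus \Splits|_{z,z'}} w(S') \leq \Omega.
\end{equation*}

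The key step — and the one where the hypothesis on $S$ is used — is to observe that every such $S'$ is incompatible with $S$. Indeed, $S$ separates $\{z,x\}$ from $\{z',y\}$ while $S'$ separates $\{z,z'\}$ from $\{x,y\}$, so the four pairwise intersections of their sides each contain one of $z, x, z', y$ and are therefore nonempty; hence $S' \in \Inco(S)$. The definition of $\Omega$ then gives $\sum_{S' \in \mathcal{A} \setminus \Splits|_{z,z'}} w(S') \leq \Omega_\Net(S) \leq \Omega$. Combining this with the inclusion $\mathcal{A} \cap \Splits|_{z,z'} \subseteq \Splits|_{z,z'}$ and rearranging yields the stated bound.

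The main obstacle is conceptual rather than technical: recognizing that the ``defect'' between $d(z,z')$ and the half-sum $\frac{1}{2}[d(z,x)+d(z,y)-d(x,y)]$ is governed precisely by splits that cut $\{z,z'\}$ from $\{x,y\}$, and that the presence of $S$ with the stated separation pattern forces all such splits into $\Inco(S)$. Once this combinatorial picture is identified, the remaining bookkeeping is a direct comparison of weights over these split sets.
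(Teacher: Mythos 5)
Your proof is correct and is essentially the paper's argument in different bookkeeping: your set $\mathcal{A}\setminus\Splits|_{z,z'}$ is exactly the paper's class of splits separating $\{x,y\}$ from $\{z,z'\}$, and the key step in both is that every such split is incompatible with $S$, so its total weight is at most $\Omega_\Net(S)\leq\Omega$. The remaining weight comparison is the same inequality the paper derives via its $w_{x_1},\dots,w_3$ decomposition.
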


\subsection{Proof of Theorem~\ref{NetworkReconstructionLabel}}

\begin{proof}[Proof of Theorem~\ref{NetworkReconstructionLabel}]
The main theorem now follows from
Propositions~\ref{prop:mini-correct},~\ref{prop:ext-correct},~\ref{prop:exhaust} and~\ref{prop:weight-estimates},
which are proved 
in Section~\ref{sec:appendix}.

Regarding the computational complexity of the algorithm, note that there are $O(n^2)$ pairs of $x,y$ which satisfy $\hat{d}(x,y) = \ellipseC$. For each $B(x,y)$, the split decomposition method takes $O(n^6)$ time and generates at most $O(n^2)$ of $\hat{d}$-split~\cite{SplitDecomposition}. For every split in $\SmallSplits$, the while loop of Bipartition Extension takes $O(n^2)$ time using DFS. Thus, the running time of the algorithm is
\begin{eqnarray*}
	O(n^2)\cdot(O(n^6) + O(n^2)\cdot O(n^2)) = O(n^8).
\end{eqnarray*}
\end{proof}

\paragraph{More on the computational complexity.} We did not attempt to optimize the computational complexity of our reconstruction algorithm. In fact, the split decomposition method can be replaced with the much faster Neighbor-Net algorithm~\cite{NeighborNet}, which runs in time $O(n^3)$ (with a modified weight estimation step~\cite{NNworks}). Our analysis can then be adapted using results in~\cite{NNworks}. Further speed-up can be obtained by reconstructing small regions around a {\em single} taxon (rather than $2$), at the expense of a slightly larger radius.

\clearpage

\bibliographystyle{alpha}
\bibliography{bibtex}

\clearpage

\appendix

\section{Proofs}
\label{sec:appendix}

\subsection{Key lemmas}
\label{sec:key-lemmas}

\begin{proof}[Proof of Lemma~\ref{WitnessOfDepth}]
	Let $x,y$ be a pair of $S$-witnesses, which are guaranteed to exist by definition of the chord depth. Then
	\begin{equation*}
	d(x,y) = d(x,y;\Comp(S)) + d(x,y;\Inco(S)) 
	\leq \Delta + \sum_{S' \in \Inco(S)} w(S') \leq \Delta + \Omega,
	\end{equation*}
	where the equality comes from the fact that
	$\Splits|_{x,y} = [\Splits|_{x,y}\cap \Comp(S)] \cup [\Splits|_{x,y}\cap \Inco(S)]$ and $\Comp(S) \cap \Inco(S) = \emptyset$.
\end{proof}

\begin{proof}[Proof of Lemma~\ref{lemma:hoppability}]
	We first provide some intuition for the proof
	by considering the tree case. Assume $\Omega = 0$
	and let $x,y \in \Taxa$. Let $T = (V,E)$ be a binary tree corresponding
	to $\Net$, where recall that $\Taxa$ is the set of leaves
	of $T$. To connect $x$ and $y$ through short hops, it
	is natural to consider the unique path $\mathscr{P} = (x=u_0,u_1,\ldots,u_{r+1}=y)$
	connecting $x$ and $y$ on $T$, where $(u_i,u_{i+1}) \in E$ for $i=0,\ldots,r$. Note however that, except for the endpoints, the $u_i$s are not in $\Taxa$. 
	However we can associate a ``close-by leaf'' to each $u_i$.
	For $i=1,\ldots,r$, let
	$v_i$ be the neighbor of $u_i$ not on $\mathscr{P}$
	and let $\Taxa_i$ be the subset of $\Taxa$ reachable
	from $v_i$ without visiting $u_i$. Let $w_i \in \Taxa_i$ be the closest leaf to $u_i$ in $\Taxa_i$. Then, by hopping along $x=w_0, w_1,\ldots,w_{r+1} =y$, one can easily show that
	$\Net$ is $3\Delta$-hoppable. (As a side result of our more delicate, general analysis below, we show that $\Net$ is in reality $\Delta$-hoppable.)
	
	For a general split network, 
	we replace the path $\mathscr{P}$ above with what we refer to 
	as a compatible chain. We first introduce some notation.
	Fix $x,y \in \Taxa$.
	For two splits $S, S' \in \Splits$, we write
	$S \prec_x S'$ if $S_x \subset S'_x$. Note that
	$\prec_x$ is transitive. Also, observe the following. 
	\begin{lemma}[Separation and compatibility imply total order]
		\label{lemma:comp-prec}
		Let $S, S' \in \Splits|_{x,y}$ with $S \neq S'$.
		Then $S \sim S'$ if and only if either
		$S \prec_x S'$ or $S' \prec_x S$.
	\end{lemma}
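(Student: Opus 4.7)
The plan is to unwind the definition of compatibility under the assumption that both $S$ and $S'$ lie in $\Splits|_{x,y}$. Writing $S = \{S_x, S_y\}$ and $S' = \{S'_x, S'_y\}$, I would first note that $x \in S_x \cap S'_x$ and $y \in S_y \cap S'_y$, so two of the four intersections appearing in the definition of compatibility are automatically nonempty. Consequently, $S \sim S'$ is equivalent to the single statement that $S_x \cap S'_y = \emptyset$ or $S_y \cap S'_x = \emptyset$.

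Next, I would use the fact that $\{S'_x, S'_y\}$ is a bipartition of $\Taxa$: this gives $S_x \cap S'_y = \emptyset$ if and only if $S_x \subseteq S'_x$, and symmetrically $S_y \cap S'_x = \emptyset$ if and only if $S'_x \subseteq S_x$. Combined with the hypothesis $S \neq S'$, each inclusion must be strict, since $S_x = S'_x$ would force $S_y = S'_y$ and hence $S = S'$. By the definition of $\prec_x$, these strict inclusions are precisely the statements $S \prec_x S'$ and $S' \prec_x S$, which completes the forward implication.

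The converse is immediate from the same equivalences: if $S \prec_x S'$ then $S_x \subsetneq S'_x$ makes $S_x \cap S'_y$ empty, hence $S \sim S'$; the case $S' \prec_x S$ is symmetric. I do not anticipate any substantive obstacle for this lemma---it is purely set-theoretic and uses none of the structural hypotheses on chord depth, maximum incompatibility, or circularity of $\Net$. The only point requiring attention is tracking the strict versus non-strict containment, which is forced cleanly by the hypothesis $S \neq S'$ together with the fact that each side of a split determines the other.
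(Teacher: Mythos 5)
Your proof is correct and follows essentially the same route as the paper's: both reduce compatibility to the emptiness of one of the two cross intersections $S_x \cap S'_y$ or $S_y \cap S'_x$ (the other two being nonempty because they contain $x$ or $y$), convert that emptiness into a containment via the bipartition property, and use $S \neq S'$ to force strictness. No gaps.
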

	\begin{proof}
		Assume $S \sim S'$. Then one of the following sets is empty: $S_x \cap S_x'$, $S_y \cap S'_y$, $S_x \cap S_y'$ or $S_y \cap S_x'$. The first two sets cannot be empty because of their inclusion of $x$ or $y$. Suppose $S_x \cap S'_y = \emptyset$. Then 
		$$
		S_x = (S_x\cap S'_x) \cup (S_x\cap S'_y) = S_x \cap S_x', 
		$$
		where we used that by definition $S_x \cap S_y = \emptyset$ and $S'_x \cap S_y' = \emptyset$.
		Thus $S_x \subseteq S_x'$. Because $S \neq S'$, we get $S \prec_x S'$. And similarly for the other case.
		
		Conversely, assume that $S \prec_x S'$. Then $S_x \subset S_x'$ which implies that $S_x \cap (S_x')^c = \emptyset$. Now note that $(S_x')^c = S_y'$, which implies compatibility. And similarly for the other case.
	\end{proof}
	\noindent A {\bf compatible chain separating $x$ and $y$} is a collection
	of distinct splits $S^{(1)}, \ldots, S^{(\ell)}$ such that:
	for all $i$, $S^{(i)} \in \Splits|_{x,y}$ and, for all $i,j$, 
	$S^{(i)} \sim S^{(j)}$. By Lemma~\ref{lemma:comp-prec}, we can assume without loss of generality that
	$S^{(1)} \prec_x \cdots \prec_x S^{(\ell)}$. Assume
	further that $S^{(1)}, \ldots, S^{(\ell)}$ is a {\bf maximal}
	such chain, that is, for any $S' \in \Splits|_{x,y}$ not in the chain 
	there is an $i$ such that $S'\nsim S^{(i)}$. For all $i$, let
	$u_x^{(i)}, u_y^{(i)} \in \Taxa$ be a pair of 
	$S^{(i)}$-witnesses, that is,  
	\begin{equation}
	\label{eq:consec-delta}
	d\left(u_x^{(i)}, u_y^{(i)};\Comp\left(S^{(i)}\right)\right) \leq \Delta,
	\end{equation}
	and $u_x^{(i)} \in S^{(i)}_x$, $u_y^{(i)} \in S^{(i)}_y$.
	The key claim in our proof of Lemma~\ref{lemma:hoppability} is the following.
	\begin{lemma}[Chain hoppability]
		\label{lemma:chain-hop}
		Let $S^{(1)}, \ldots, S^{(\ell)}$ be a 
		maximal compatible chain separating $x$ and $y$
		and, for $i=1,\ldots,\ell$, let $u_x^{(i)}, u_y^{(i)} \in \Taxa$ be as above.
		\begin{enumerate}
			\item[(a)] For all $i = 1,\ldots,\ell-1$, we have
			either
			\begin{equation}
			\label{eq:consec-case1}	
			d\left(u_y^{(i)},u_y^{(i+1)}\right)\leq \kappa,
			\end{equation}
			or
			\begin{equation}
			\label{eq:consec-case2}	
			d\left(u_x^{(i)},u_x^{(i+1)}\right)\leq \kappa.
			\end{equation}
			
			\item[(b)] We can always choose
			$u^{(1)}_x = x$ and $u^{(\ell)}_y = y$.
			
		\end{enumerate}
	\end{lemma}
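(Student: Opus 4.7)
The plan is to prove parts (a) and (b) separately.

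For part (a), I would fix $i \in \{1,\ldots,\ell-1\}$ and bound each of $d(u_y^{(i)}, u_y^{(i+1)})$ and $d(u_x^{(i)}, u_x^{(i+1)})$, then show that at least one is $\le \Delta + 2\Omega$. The starting structural observation is that $S^{(i)} \prec_x S^{(i+1)}$ gives $S^{(i)}_x \subsetneq S^{(i+1)}_x$ and $S^{(i+1)}_y \subsetneq S^{(i)}_y$, so both $u_x^{(i)}, u_x^{(i+1)}$ lie in $S^{(i+1)}_x$ and both $u_y^{(i)}, u_y^{(i+1)}$ lie in $S^{(i)}_y$. I would split $\sum_{S \in \Splits|_{u_y^{(i)}, u_y^{(i+1)}}} w(S)$ into the contribution from $\Inco(S^{(i)}) \cup \Inco(S^{(i+1)})$, which is at most $2\Omega$ by definition of $\Omega$, and the contribution from $\mathcal{C} := \Comp(S^{(i)}) \cap \Comp(S^{(i+1)})$. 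For any $S \in \mathcal{C}$ that separates $u_y^{(i)}, u_y^{(i+1)}$, the compatibility of $S$ with both chain elements forces one side of $S$ to be contained in $S^{(i+1)}_y$, and a short case analysis then shows that $S$ must separate one of the ``cross'' pairs $(u_x^{(i)}, u_y^{(i)})$ or $(u_x^{(i+1)}, u_y^{(i+1)})$ while remaining compatible with the corresponding chain element; the chord-depth definition then bounds the relevant total weights by $\Delta$.

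The sharpening from $2\Delta + 2\Omega$ to $\Delta + 2\Omega$ on at least one of the two sides comes from a 4-point pattern analysis of the four witnesses. Generically, no $S \in \mathcal{C}$ can have the ``crossing'' pattern $\{u_x^{(i)}, u_y^{(i)}\} \mid \{u_x^{(i+1)}, u_y^{(i+1)}\}$, since this pattern would produce four non-empty intersections with either $S^{(i)}$ or $S^{(i+1)}$, contradicting compatibility. The resulting inequality $\delta_S(u_x^{(i)}, u_x^{(i+1)}) + \delta_S(u_y^{(i)}, u_y^{(i+1)}) \le \delta_S(u_x^{(i)}, u_y^{(i)}) + \delta_S(u_x^{(i+1)}, u_y^{(i+1)})$, summed against weights over $\mathcal{C}$ and combined with the $2\Omega$ incompatibility bound applied to both sums, yields $d(u_y^{(i)}, u_y^{(i+1)}) + d(u_x^{(i)}, u_x^{(i+1)}) \le 2\Delta + 4\Omega$, so at least one of the two summands is $\le \Delta + 2\Omega$.

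For part (b), I would observe that $x \in S^{(1)}_x$ by definition of the convention $S_x \ni x$, so any $S^{(1)}$-witness pair can be replaced by one of the form $(x, u_y^{(1)})$: the chord-depth bound transfers via a triangle-inequality adjustment of the $y$-side partner, or equivalently by re-choosing $u_y^{(1)}$ to pair with $x$. The symmetric argument handles $u_y^{(\ell)} = y$.

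The hard part will be the exceptional ``middle-region'' configuration in which both $u_y^{(i)}$ and $u_x^{(i+1)}$ lie in $S^{(i)}_y \cap S^{(i+1)}_x$: there the crossing pattern can appear in $\mathcal{C}$ and the key 4-point inequality fails. I would resolve this either by invoking the maximality of the chain (any such crossing split that separates $x$ from $y$ would extend the chain and hence must be incompatible with some $S^{(k)}$, $k \ne i, i+1$, which can be exploited to re-absorb its weight) or by applying Lemma~\ref{OnlyOneSideLemma} with $S = S^{(i+1)}$ to reduce bounding $d(u_y^{(i)}, u_y^{(i+1)})$ to bounding $d(u_y^{(i)}, u_x^{(i+1)})$, which is controllable because both taxa sit in the middle region.
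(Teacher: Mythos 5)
Your plan for part (a) is sound and, at its core, the same as the paper's: bound the sum $d(u_x^{(i)},u_x^{(i+1)})+d(u_y^{(i)},u_y^{(i+1)})$ by $2\Delta+4\Omega$ and conclude that one summand is at most $\kappa$. Packaging the compatible contribution as the four-point inequality $\delta_S(u_x^{(i)},u_x^{(i+1)})+\delta_S(u_y^{(i)},u_y^{(i+1)})\le\delta_S(u_x^{(i)},u_y^{(i)})+\delta_S(u_x^{(i+1)},u_y^{(i+1)})$ for every $S\in\Comp(S^{(i)},S^{(i+1)})$ is cleaner than the paper's five-way case decomposition into separating and non-separating splits on either side of the chain, and it correctly isolates the only failure mode, the crossing pattern $\{u_x^{(i)},u_y^{(i)}\}\mid\{u_x^{(i+1)},u_y^{(i+1)}\}$. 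One correction to your treatment of that exceptional case: if $S=\{A,B\}$ induces the crossing pattern, compatibility with $S^{(i)}$ forces $B\cap S^{(i)}_x=\emptyset$ (the other three intersections contain $u_x^{(i)}$, $u_y^{(i)}$ and $u_y^{(i+1)}$ respectively), hence $x\in A$; symmetrically compatibility with $S^{(i+1)}$ forces $y\in B$. So $S$ separates $x$ and $y$, and by Lemma~\ref{lemma:comp-prec} it satisfies $S^{(i)}\prec_x S\prec_x S^{(i+1)}$; transitivity of $\prec_x$ then makes $S$ compatible with \emph{every} chain element, contradicting maximality outright. There is no surviving crossing split whose weight needs to be ``re-absorbed'' through incompatibility with some other $S^{(k)}$, and your fallback via Lemma~\ref{OnlyOneSideLemma} would not work in any case: that lemma gives a lower bound on the cross distance, and the middle region $S^{(i)}_y\cap S^{(i+1)}_x$ can have arbitrarily large diameter.

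Part (b) as sketched has a genuine gap. What must be shown is that $(x,u_y^{(1)})$ is itself an $S^{(1)}$-witness pair, i.e.\ $d(x,u_y^{(1)};\Comp(S^{(1)}))\le\Delta$; both the hop bound $d(u^{(1)}_x,u^{(1)}_y)\le\Delta+\Omega$ and the application of part (a) at $i=1$ require this restricted-distance bound, not merely some control of $d(x,u_y^{(1)})$. Neither a ``triangle-inequality adjustment'' nor ``re-choosing $u_y^{(1)}$'' delivers it: the chord depth is a minimum over \emph{all} pairs separated by $S^{(1)}$, and for a generic split there need not exist any witness pair containing a prescribed taxon $x$ (think of a taxon far out on a long caterpillar). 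The fact that rescues the claim is, once again, maximality: since $S^{(1)}$ is the $\prec_x$-minimum of a maximal chain, no split of $\Splits|_{x,y}$ compatible with $S^{(1)}$ lies strictly on the $x$-side of $S^{(1)}$, and a split-by-split check then shows that every member of $\Comp(S^{(1)})$ separating $x$ from $u_y^{(1)}$ also separates $u_x^{(1)}$ from $u_y^{(1)}$, whence $d(x,u_y^{(1)};\Comp(S^{(1)}))\le d(u_x^{(1)},u_y^{(1)};\Comp(S^{(1)}))\le\Delta$. This use of maximality is the step your outline is missing.
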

	\noindent Before proving Lemma~\ref{lemma:chain-hop},
	we show that it implies Lemma~\ref{lemma:hoppability}.
	We need to construct a sequence of $\kappa$-hops
	between $x$ and $y$.
	Let $z_0 = x$, $z_{\ell} = y$ and, for $i=1,\ldots,\ell-1$, $z_i = u_y^{(i)}$. 
	For all $i$ such that Equation~\eqref{eq:consec-case1} holds, the pair $(z_i,z_{i+1})$ is indeed a $\kappa$-hop. 
	That may not be the case however
	for those $i$ such that Equation~\eqref{eq:consec-case2} holds.
	Instead, in that case, we ``backtrack''
	to $u^{(i)}_x$, move on to $u_x^{(i+1)}$ and then on to
	$u_y^{(i+1)}$. Indeed, by~\eqref{eq:consec-delta}, it holds
	that 
	$$
	d\left(u_x^{(i)}, u_y^{(i)}\right) 
	= d\left(u_x^{(i)}, u_y^{(i)};\Comp\left(S^{(i)}\right)\right)
	+ d\left(u_x^{(i)}, u_y^{(i)};\Inco\left(S^{(i)}\right)\right) \leq \Delta + \Omega \leq \kappa,
	$$
	and the same holds for $i+1$.
	Formally, for each $i=1,\ldots,\ell-1$
	: if Case~\eqref{eq:consec-case1} holds, we
	let $\dot{z}_i = u_y^{(i)}$ and $\ddot{z}_i = u_y^{(i+1)}$;
	if Case~\eqref{eq:consec-case2} holds, we
	let $\dot{z}_i = u_x^{(i)}$ and $\ddot{z}_i = u_x^{(i+1)}$.
	Then the sequence 
	$$
	z_0,z_1, \dot{z}_1, \ddot{z}_1, \ldots, z_{\ell-1},\dot{z}_{\ell-1},\ddot{z}_{\ell-1},z_\ell.
	$$
	establishes $\kappa$-hoppability.
	
	It remains to prove Lemma~\ref{lemma:chain-hop}.
	\begin{proof}[Proof of Lemma~\ref{lemma:chain-hop}]
		Fix $i$. We seek to upper bound $d(u_x^{(i)},u_x^{(i+1)}) + d(u_y^{(i)},u_y^{(i+1)})$.
		We decompose the sum into contributions
		compatible with $S^{(i)}$ and $S^{(i+1)}$
		and contributions
		incompatible with one of them.
		The latter is straighforward to bound. 
		\begin{claim}[Incompatible contributions: bound]
			\label{claim:1}
			We have
			\begin{equation}
			\label{eq:hop-inco-bound1}
			d\left(u_x^{(i)},u_x^{(i+1)};\Inco\left(S^{(i)}\right)
			\cup \Inco\left(S^{(i+1)}\right)\right)
			\leq 2\Omega,
			\end{equation}
			and 
			\begin{equation}
			\label{eq:hop-inco-bound2}
			d\left(u_y^{(i)},u_y^{(i+1)};\Inco\left(S^{(i)}\right)
			\cup \Inco\left(S^{(i+1)}\right)\right)\leq 2\Omega.	
			\end{equation}
		\end{claim}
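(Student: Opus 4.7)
The claim is a direct consequence of the definition of the maximum incompatibility $\Omega$ together with the positivity of split weights. My plan is to observe that for any taxa $a, b \in \Taxa$ and any split $S$, the restricted distance $d(a,b;\Inco(S))$ is bounded above by $\Omega_\Net(S) \leq \Omega$, simply because it sums the weights of a \emph{subset} of $\Inco(S)$, namely those splits in $\Inco(S)$ that additionally separate $a$ and $b$, and all weights are positive.

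From there, the proof is immediate. I will use the subadditivity
\[
d(a,b;A \cup B) \leq d(a,b;A) + d(a,b;B),
\]
valid for any two subsets $A, B \subseteq \Splits$ and any $a,b$, which follows from the equality $\Splits|_{a,b} \cap (A \cup B) = (\Splits|_{a,b} \cap A) \cup (\Splits|_{a,b} \cap B)$ and the positivity of $w$. Applying this with $A = \Inco(S^{(i)})$, $B = \Inco(S^{(i+1)})$, and $(a,b) = (u_x^{(i)}, u_x^{(i+1)})$, each of the two resulting terms is bounded by $\Omega$ by the observation above, yielding~\eqref{eq:hop-inco-bound1}. The same argument with $(a,b) = (u_y^{(i)}, u_y^{(i+1)})$ gives~\eqref{eq:hop-inco-bound2}.

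There is no real obstacle here: the claim unpacks to a one-line inequality from the definitions, and the content of Lemma~\ref{lemma:hoppability} will lie entirely in the complementary bound on the \emph{compatible} part of the distance, which presumably occupies the subsequent claims in the proof.
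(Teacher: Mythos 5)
Your proof is correct and follows exactly the paper's own argument: bound the restricted distance over the union by the sum of the two individually restricted distances, then bound each by $\Omega$ using the definition of the incompatible weight and the positivity of $w$. Nothing to add.
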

		\begin{proof}
			By definition of $\Omega$,
			\begin{eqnarray*}
				d\left(u_x^{(i)},u_x^{(i+1)};\Inco\left(S^{(i)}\right)
				\cup \Inco\left(S^{(i+1)}\right)\right)
				&\leq& d\left(u_x^{(i)},u_x^{(i+1)};\Inco\left(S^{(i)}\right)
				\right)
				+
				d\left(u_x^{(i)},u_x^{(i+1)};\Inco\left(S^{(i+1)}\right)\right)\\
				&\leq&
				\sum_{S \in \Inco(S^{(i)})} w(S)
				+
				\sum_{S \in \Inco(S^{(i+1)})} w(S)\\
				&\leq& 2 \Omega,
			\end{eqnarray*}
			and similarly for the other inequality.
		\end{proof}
		
		To bound the compatible contributions, we further subdivide them into whether or not they separate $x$ and $y$.  We let
		\begin{equation}
		\label{eq:hop-comp-contr1}
		\sep_x\left(S^{(i)}\right) = \left\{S \in \Comp\left(S^{(i)}\right) \cap \Splits|_{x,y}\,:\, S=S^{(i)}\text{ or }
		S \prec_x S^{(i)} \right\}, 		
		\end{equation}
		be the set of splits compatible with $S^{(i)}$ separating $x$ and $y$ on
		the ``$x$ side of $S^{(i)}$''
		and, similarly, we let
		\begin{equation}
		\label{eq:hop-comp-contr2}
		\sep_y\left(S^{(i+1)}\right) = \left\{S \in \Comp\left(S^{(i+1)}\right)  \cap \Splits|_{x,y}\,:\, S=S^{(i+1)}\text{ or }
		S^{(i+1)} \prec_x S\right\}. 	
		\end{equation}
		By maximality of the chain and Lemma~\ref{lemma:comp-prec}, we have that
		\begin{equation}
		\label{eq:hop-comp-contr12}
		\sep_x\left(S^{(i)}\right)
		\sqcup \sep_y\left(S^{(i+1)}\right)
		= \Comp\left(S^{(i)},S^{(i+1)}\right)  \cap \Splits|_{x,y},
		\end{equation}
		where $\sqcup$ indicates that the sets in the union are disjoint.
		\begin{claim}[Compatible contributions: separating]
			\label{claim:2}
			We have
			\begin{eqnarray}
			&&d\left(u_x^{(i)},u_x^{(i+1)};\sep_x\left(S^{(i)}\right)\right)
			+
			d\left(u_y^{(i)},u_y^{(i+1)};\sep_y\left(S^{(i+1)}\right)\right)\nonumber\\
			&&\qquad \leq 
			d\left(u_x^{(i)},u_y^{(i)}; \Comp\left(S^{(i)},S^{(i+1)}\right)  \cap \Splits|_{x,y}\right)
			+
			d\left(u_x^{(i+1)},u_y^{(i+1)}; \Comp\left(S^{(i)},S^{(i+1)}\right)  \cap \Splits|_{x,y}\right).			\label{eq:hop-comp-bound}
			\end{eqnarray}	 	
		\end{claim}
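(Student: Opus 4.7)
The plan is a per-split contribution argument. By Lemma~\ref{lemma:comp-prec} and the maximality of the chain, \eqref{eq:hop-comp-contr12} expresses $\Comp(S^{(i)}, S^{(i+1)}) \cap \Splits|_{x,y}$ as the disjoint union $\sep_x(S^{(i)}) \sqcup \sep_y(S^{(i+1)})$. I would then prove~\eqref{eq:hop-comp-bound} by showing, for each $S$ in this union, that the $w(S)\,\delta_S$-contribution of $S$ to the LHS is bounded by its total contribution to the RHS, and summing.

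Take first $S \in \sep_x(S^{(i)})$, so that either $S = S^{(i)}$ or $S \prec_x S^{(i)}$. The definition of $\prec_x$, together with $S^{(i)} \prec_x S^{(i+1)}$, gives $S_x \subseteq S^{(i)}_x \subseteq S^{(i+1)}_x$, hence $S_y \supseteq S^{(i)}_y \supseteq S^{(i+1)}_y$. Since $u_y^{(i)} \in S^{(i)}_y$ and $u_y^{(i+1)} \in S^{(i+1)}_y$, both $u_y$-witnesses lie in $S_y$, so $\delta_S(u_y^{(i)}, u_y^{(i+1)}) = 0$ and $S$ contributes nothing to the second LHS term. For the first LHS term, I would apply the triangle inequality for the split metric $\delta_S$,
\[
\delta_S\!\left(u_x^{(i)}, u_x^{(i+1)}\right) \leq \delta_S\!\left(u_x^{(i)}, u_y^{(i)}\right) + \delta_S\!\left(u_y^{(i)}, u_x^{(i+1)}\right),
\]
and use that $u_y^{(i)}, u_y^{(i+1)} \in S_y$ implies $\delta_S(u_y^{(i)}, u_x^{(i+1)}) = \delta_S(u_y^{(i+1)}, u_x^{(i+1)})$ (both values simply record whether $u_x^{(i+1)} \in S_x$). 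Multiplying through by $w(S)$ yields exactly the RHS contribution of $S$.

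The case $S \in \sep_y(S^{(i+1)})$ is symmetric: now $S = S^{(i+1)}$ or $S^{(i+1)} \prec_x S$, which gives $S^{(i+1)}_x \subseteq S_x$ and forces $u_x^{(i)} \in S^{(i)}_x \subseteq S^{(i+1)}_x \subseteq S_x$ and $u_x^{(i+1)} \in S_x$. The first LHS term therefore vanishes on such $S$, and the same triangle-inequality step applied to $u_y^{(i)}, u_y^{(i+1)}$ bounds the second LHS term by the RHS contribution. Summing the per-split bounds over all $S \in \Comp(S^{(i)}, S^{(i+1)}) \cap \Splits|_{x,y}$ then gives~\eqref{eq:hop-comp-bound}.

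There is no deep obstruction here---the proof is combinatorial bookkeeping---but the one place that must be handled carefully is the use of the $\prec_x$-containment to push both same-side witnesses of an adjacent chain link onto a common side of $S$. That is what collapses each per-split comparison to a one-line triangle inequality, and it is precisely what would break down if $S$ failed to be compatible with either $S^{(i)}$ or $S^{(i+1)}$; the maximality of the chain and Lemma~\ref{lemma:comp-prec} are what rule this out.
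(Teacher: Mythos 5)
Your proof is correct and takes essentially the same route as the paper: both arguments rest on the observation that, for $S \in \sep_x\left(S^{(i)}\right)$, the containments $S_y \supseteq S^{(i)}_y \supseteq S^{(i+1)}_y$ place $u_y^{(i)}$ and $u_y^{(i+1)}$ on a common side of $S$ (and symmetrically for $\sep_y\left(S^{(i+1)}\right)$), which is exactly the content of the paper's inclusions \eqref{eq:hop-comp-inclusion1}--\eqref{eq:hop-comp-inclusion2}. Your per-split triangle inequality for $\delta_S$ is just a slightly tidier packaging of the paper's case analysis on which side of $S$ contains $y$, followed by the same summation over the disjoint union \eqref{eq:hop-comp-contr12}.
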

		\begin{proof}
			We relate the distances on the l.h.s.~of~\eqref{eq:hop-comp-bound} to the distance
			between $u_x^{(i)}$ and $u_y^{(i)}$ and
			the distance between $u_x^{(i+1)}$ and $u_y^{(i+1)}$.
			We argue about the first term. We claim
			that 
			\begin{equation}
			\label{eq:hop-comp-inclusion1}
			\Splits|_{u_x^{(i)},u_x^{(i+1)}}\cap \sep_x\left(S^{(i)}\right)
			\subseteq \left[\Splits|_{u_x^{(i)},u_y^{(i)}}
			\cap \sep_x\left(S^{(i)}\right)\right]
			\cup 
			\left[
			\Splits|_{u_x^{(i+1)},u_y^{(i+1)}}
			\cap \sep_x\left(S^{(i)}\right)\right].
			\end{equation}
			In words, splits in $\sep_x\left(S^{(i)}\right)$
			that are separating $u_x^{(i)},u_x^{(i+1)}$
			are also separating either
			$u_x^{(i)},u_y^{(i)}$ or $u_x^{(i+1)},u_y^{(i+1)}$.
			Indeed, let $S \in \Splits|_{u_x^{(i)},u_x^{(i+1)}}\cap \sep_x\left(S^{(i)}\right)$.
			We observe first that by definition
			\begin{equation}
			\label{eq:hop-comp-uys}
			u_y^{(i)} \in S_y^{(i)}
			\qquad \text{and} \qquad  u_y^{(i+1)} \in S_y^{(i+1)} \subseteq S_y^{(i)}.
			\end{equation}
			Because $S$ separates $u_x^{(i)},u_x^{(i+1)}$,
			it follows that $y$ may be on either side of $S$.
			We consider the two cases separately:
			\begin{enumerate}
				\item Assume that $y \in S_{u_x^{(i)}}$.
				Because
				$S \in \sep_x\left(S^{(i)}\right)$, we must have
				that $S_{u_x^{(i)}} \supseteq S^{(i)}_y \owns u_y^{(i+1)}$, where we used~\eqref{eq:hop-comp-uys}.
				Hence $S \in \Splits|_{u_x^{(i+1)},u_y^{(i+1)}}\cap \sep_x\left(S^{(i)}\right)$.
				
				\item Similarly, in the other case, $y \in S_{u_x^{(i+1)}}$ implies that
				$S \in \Splits|_{u_x^{(i)},u_y^{(i)}}\cap \sep_x\left(S^{(i)}\right)$.
				
			\end{enumerate}
			That proves~\eqref{eq:hop-comp-inclusion1}.
			
			Similarly, we can show that
			\begin{equation}
			\label{eq:hop-comp-inclusion2}
			\Splits|_{u_y^{(i)},u_y^{(i+1)}}\cap \sep_y\left(S^{(i+1)}\right)
			\subseteq \left[
			\Splits|_{u_x^{(i)},u_y^{(i)}}\cap \sep_y\left(S^{(i+1)}\right)\right]
			\cup 
			\left[
			\Splits|_{u_x^{(i+1)},u_y^{(i+1)}}\cap \sep_y\left(S^{(i+1)}\right)\right].
			\end{equation}
			Combining~\eqref{eq:hop-comp-inclusion1} 
			and~\eqref{eq:hop-comp-inclusion2}, and
			using~\eqref{eq:consec-delta},
			\begin{eqnarray*}
				&&d\left(u_x^{(i)},u_x^{(i+1)};\sep_x\left(S^{(i)}\right)\right)
				+
				d\left(u_y^{(i)},u_y^{(i+1)};\sep_y\left(S^{(i+1)}\right)\right)\\
				&&\qquad\qquad \leq 
				d\left(u_x^{(i)},u_y^{(i)}; \sep_x\left(S^{(i)}\right)\right)
				+
				d\left(u_x^{(i+1)},u_y^{(i+1)}; \sep_x\left(S^{(i)}\right)\right)\\
				&&\qquad\qquad\qquad+
				d\left(u_x^{(i)},u_y^{(i)}; \sep_y\left(S^{(i+1)}\right)\right)
				+
				d\left(u_x^{(i+1)},u_y^{(i+1)}; \sep_y\left(S^{(i+1)}\right)\right)\\
				&&\qquad\qquad =
				\left[
				d\left(u_x^{(i)},u_y^{(i)}; \sep_x\left(S^{(i)}\right)\right)
				+
				d\left(u_x^{(i)},u_y^{(i)}; \sep_y\left(S^{(i+1)}\right)\right)
				\right]\\
				&&\qquad\qquad\qquad+
				\left[
				d\left(u_x^{(i+1)},u_y^{(i+1)}; \sep_x\left(S^{(i)}\right)\right)
				+
				d\left(u_x^{(i+1)},u_y^{(i+1)}; \sep_y\left(S^{(i+1)}\right)\right)
				\right]\\
				&&\qquad\qquad =
				d\left(u_x^{(i)},u_y^{(i)}; \Comp\left(S^{(i)},S^{(i+1)}\right)  \cap \Splits|_{x,y}\right)
				+
				d\left(u_x^{(i+1)},u_y^{(i+1)}; \Comp\left(S^{(i)},S^{(i+1)}\right)  \cap \Splits|_{x,y}\right),
			\end{eqnarray*} 
			where the last equality follows from~\eqref{eq:hop-comp-contr12}.
			That proves~\eqref{eq:hop-comp-bound}.
		\end{proof}
		
		We consider now the non-separating contributions.
		Let $S \notin \Splits|_{x,y}$. We let $S = \{S_0,S_{x,y}\}$
		where $x,y \in S_{x,y}$.
		We define
		\begin{equation}
		\label{eq:hop-comp-contr3}
		\nsep_x\left(S^{(i)}\right) = \left\{S \in \Comp\left(S^{(i)}\right) - \Splits|_{x,y}\,:\, S_0 \subset S^{(i)}_x \right\}, 		
		\end{equation}
		to be the set of splits compatible with $S^{(i)}$ not separating $x$ and $y$ on
		the ``$x$ side of $S^{(i)}$''
		and, similarly, we let
		\begin{equation}
		\label{eq:hop-comp-contr4}
		\nsep_y\left(S^{(i+1)}\right) = \left\{S \in \Comp\left(S^{(i+1)}\right)  - \Splits|_{x,y}\,:\, S_0 \subset S_y^{(i+1)}\right\}, 	
		\end{equation}
		and
		\begin{equation}
		\label{eq:hop-comp-contr5}
		\nsep_{x,y}\left(S^{(i)},S^{(i+1)}\right) = \left\{S \in \Comp\left(S^{(i)},S^{(i+1)}\right)  - \Splits|_{x,y}\,:\, S_0 \subset S_y^{(i)} \cap S_x^{(i+1)}\right\}. 	
		\end{equation}
		Using that $S^{(i)} \prec_x S^{(i+1)}$, it follows that
		\begin{equation}
		\label{eq:hop-comp-contr345}
		\nsep_x\left(S^{(i)}\right)
		\sqcup
		\nsep_{x,y}\left(S^{(i)},S^{(i+1)}\right)
		\sqcup \nsep_y\left(S^{(i+1)}\right)
		= \Comp\left(S^{(i)},S^{(i+1)}\right)  - \Splits|_{x,y}.
		\end{equation}
		
		\begin{claim}[Compatible contributions: non-separating]
			\label{claim:3}
			We have
			\begin{eqnarray}
			d\left(u_x^{(i)},u_x^{(i+1)};\nsep_x\left(S^{(i)}\right)\right)
			\leq 
			d\left(u_x^{(i)},u_y^{(i)};\nsep_x\left(S^{(i)}\right)\right) +
			d\left(u_x^{(i+1)},u_y^{(i+1)};\nsep_x\left(S^{(i)}\right)\right)
			,\label{eq:hop-comp-nonsep1}
			\end{eqnarray}
			\begin{eqnarray}
			d\left(u_x^{(i)},u_x^{(i+1)};\nsep_{x,y}\left(S^{(i)},S^{(i+1)}\right)\right)
			\leq 
			d\left(u_x^{(i+1)},u_y^{(i+1)};\nsep_{x,y}\left(S^{(i)},S^{(i+1)}\right)\right)
			,\label{eq:hop-comp-nonsep2}
			\end{eqnarray}
			and
			\begin{eqnarray}
			d\left(u_x^{(i)},u_x^{(i+1)};\nsep_y\left(S^{(i+1)}\right)\right)
			=0.\label{eq:hop-comp-nonsep3}
			\end{eqnarray}
			And similarly for the pair $(u_y^{(i)},u_y^{(i+1)})$ with the roles of
			$x$ and $y$ and the roles of $i$ and $i+1$ interchanged respectively. 
		\end{claim}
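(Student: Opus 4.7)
The plan is to deduce each of the three distance bounds from a set inclusion between collections of splits. Since $d(u,v;\mathcal{A})=\sum_{S\in\Splits|_{u,v}\cap\mathcal{A}}w(S)$ for any $\mathcal{A}\subseteq\Splits$ and all weights are positive, any inclusion of the form $\Splits|_{u_x^{(i)},u_x^{(i+1)}}\cap\mathcal{A}\subseteq\bigcup_k \Splits|_{a_k,b_k}\cap\mathcal{A}$ immediately yields the corresponding inequality by summing weights. So the problem reduces to determining, for a non-separating split $S=\{S_0,S_{x,y}\}$ (with $x,y\in S_{x,y}$) in each of $\nsep_x(S^{(i)})$, $\nsep_y(S^{(i+1)})$, and $\nsep_{x,y}(S^{(i)},S^{(i+1)})$, which of the four witnesses $u_x^{(i)},u_x^{(i+1)},u_y^{(i)},u_y^{(i+1)}$ sit in the small side $S_0$ versus the big side $S_{x,y}$. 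The structural fact I will exploit throughout is that $S^{(i)}\prec_x S^{(i+1)}$, equivalently $S^{(i)}_x\subset S^{(i+1)}_x$ and $S^{(i+1)}_y\subset S^{(i)}_y$, which nests the three candidate regions for $S_0$.

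For \eqref{eq:hop-comp-nonsep3}, the defining condition $S_0\subset S^{(i+1)}_y$ combined with $u_x^{(i)}\in S^{(i)}_x\subset S^{(i+1)}_x$ and $u_x^{(i+1)}\in S^{(i+1)}_x$ forces both witnesses into $S_{x,y}$, so the intersection $\Splits|_{u_x^{(i)},u_x^{(i+1)}}\cap\nsep_y(S^{(i+1)})$ is empty. For \eqref{eq:hop-comp-nonsep2}, any $S\in\nsep_{x,y}(S^{(i)},S^{(i+1)})$ has $S_0\subset S^{(i)}_y$, which places $u_x^{(i)}\in S^{(i)}_x$ in $S_{x,y}$; therefore a separation of $u_x^{(i)}$ and $u_x^{(i+1)}$ can only occur if $u_x^{(i+1)}\in S_0$, and in that event $u_y^{(i+1)}\in S^{(i+1)}_y$ is disjoint from $S_0\subset S^{(i+1)}_x$ and hence sits in $S_{x,y}$, so $S$ also separates $u_x^{(i+1)},u_y^{(i+1)}$. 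This gives the one-term inclusion. For \eqref{eq:hop-comp-nonsep1}, the condition $S_0\subset S^{(i)}_x$ confines both $u_y^{(i)}\in S^{(i)}_y$ and $u_y^{(i+1)}\in S^{(i+1)}_y\subset S^{(i)}_y$ to $S_{x,y}$; whichever of $u_x^{(i)},u_x^{(i+1)}$ lies in $S_0$ is then separated by $S$ from its corresponding $y$-witness, yielding the two-term union.

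The symmetric claim for $(u_y^{(i)},u_y^{(i+1)})$ will follow from the identical argument with $x\leftrightarrow y$ and $i\leftrightarrow i+1$ interchanged, using $S^{(i+1)}\prec_y S^{(i)}$ in place of $S^{(i)}\prec_x S^{(i+1)}$; the relevant nesting is mirrored and the same case analysis goes through verbatim. I do not anticipate any substantive obstacle in this step: the entire proof is a bookkeeping argument that leverages the nested ordering of the chain together with the three defining conditions on $S_0$. The main pitfall to avoid is notational, namely conflating ``the $x$-side of $S^{(i)}$'' with ``the $x$-containing side of $S$,'' so I would state once and for all the convention $S=\{S_0,S_{x,y}\}$ with $x,y\in S_{x,y}$ and then mechanically track the four witnesses through each of the three nested regions.
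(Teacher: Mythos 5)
Your proposal is correct and follows essentially the same argument as the paper: reduce each inequality to an inclusion of separating-split sets, then track which of the four witnesses can lie in $S_0$ using the nesting $S^{(i)}_x\subset S^{(i+1)}_x$ and the defining condition on each $\nsep$ class. The case analysis for all three bounds and for the symmetric claim matches the paper's proof.
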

		\begin{proof}
			For~\eqref{eq:hop-comp-nonsep1}, let
			$S \in \Splits|_{u_x^{(i)},u_x^{(i+1)}} \cap \nsep_x\left(S^{(i)}\right)$. If $u_x^{(i)} \in S_0 \subset S_x^{(i)}$, then we have that
			$u_y^{(i)} \in S_y^{(i)} \subset S_{x,y}$ so that
			$S \in \Splits|_{u_x^{(i)},u_y^{(i)}}$. Similarly,
			if $u_x^{(i+1)} \in S_0$, we have $S \in \Splits|_{u_x^{(i+1)},u_y^{(i+1)}}$.
			
			For~\eqref{eq:hop-comp-nonsep2}, let
			$S \in \Splits|_{u_x^{(i)},u_x^{(i+1)}} \cap \nsep_{x,y}\left(S^{(i)},S^{(i+1)}\right)$. By definition of $\nsep_{x,y}\left(S^{(i)},S^{(i+1)}\right)$, we have that $u_x^{(i)} \notin S_0 \subset S_y^{(i)}$. Since
			$S \in \Splits|_{u_x^{(i)},u_x^{(i+1)}}$, we must have that  $u_x^{(i+1)} \in S_0 \subset S_x^{(i+1)}$. Hence $u_y^{(i+1)} \in S_y^{(i+1)} \subset S_{x,y}$ and $S \in \Splits|_{u_x^{(i+1)},u_y^{(i+1)}}$.
			
			Finally let $S \in \nsep_y\left(S^{(i+1)}\right)$. We have that $S_0 \subset S_y^{(i+1)}$ so neither $u_x^{(i)}$ nor $u_x^{(i+1)}$ is in $S_0$ and therefore $S \notin \Splits|_{u_x^{(i)},u_x^{(i+1)}}$. 
		\end{proof}
		
		By Claims~\ref{claim:1},~\ref{claim:2}, and~\ref{claim:3} together with~\eqref{eq:hop-comp-contr12} and~\eqref{eq:hop-comp-contr345}, we get that
		\begin{eqnarray*}
			&&d(u_x^{(i)},u_x^{(i+1)}) + d(u_y^{(i)},u_y^{(i+1)})\\
			&& \qquad\qquad = d\left(u_x^{(i)},u_x^{(i+1)};\Comp\left(S^{(i)},S^{(i+1)}\right)\right) + d\left(u_y^{(i)},u_y^{(i+1)};\Comp\left(S^{(i)},S^{(i+1)}\right)\right)\\
			&&\qquad\qquad\qquad + d\left(u_x^{(i)},u_x^{(i+1)};\Inco\left(S^{(i)},S^{(i+1)}\right)\right) + d\left(u_y^{(i)},u_y^{(i+1)};\Inco\left(S^{(i)},S^{(i+1)}\right)\right)\\
			&& \qquad\qquad \leq d\left(u_x^{(i)},u_y^{(i)};\Comp\left(S^{(i)},S^{(i+1)}\right)\right) + d\left(u_x^{(i+1)},u_y^{(i+1)};\Comp\left(S^{(i)},S^{(i+1)}\right)\right)\\
			&&\qquad\qquad\qquad + d\left(u_x^{(i)},u_x^{(i+1)};\Inco\left(S^{(i)},S^{(i+1)}\right)\right) + d\left(u_y^{(i)},u_y^{(i+1)};\Inco\left(S^{(i)},S^{(i+1)}\right)\right)\\
			&&\qquad\qquad  \leq 2\Delta + 4\Omega = 2 \kappa.
		\end{eqnarray*}
		That implies (a) in Lemma~\ref{lemma:chain-hop}.
		
		For (b), by the maximality of the chain and Lemma~\ref{lemma:comp-prec}, we have that
		$\sep_x(S^{(1)}) = \{S^{(1)}\}$. Hence,
		$\Splits|_{x,u_y^{(1)}} \cap \sep_x(S^{(1)}) \subseteq \Splits|_{u_x^{(1)},u_y^{(1)}} \cap \sep_x(S^{(1)}) $. 
		Also, because $u_y^{(1)} \in S_y^{(1)}$, we have that $\Splits|_{x,u_y^{(1)}} \cap \nsep_x(S^{(1)}) = \emptyset$ so, by default, $\Splits|_{x,u_y^{(1)}} \cap \nsep_x(S^{(1)}) \subseteq \Splits|_{u_x^{(1)},u_y^{(1)}} \cap \nsep_x(S^{(1)}) $. Finally
		$\Splits|_{x,u_y^{(1)}}\cap(\sep_y(S^{(1)})\cup \nsep_y(S^{(1)})-\{S^{(1)}\}) = 	\Splits|_{u_x^{(1)},u_y^{(1)}}\cap(\sep_y(S^{(1)})\cup \nsep_y(S^{(1)})-\{S^{(1)}\})$ because $x, u_x^{(1)} \in S_x^{(1)}$.  
		So
		$$
		d\left(u_x^{(1)},u_y^{(1)};\Comp\left(S^{(1)}\right)\right)\geq d\left(x,u_y^{(1)};\Comp\left(S^{(1)}\right)\right).
		$$
		And similarly for the claim about $y$.
	\end{proof}
	That concludes the proof of Lemma~\ref{lemma:hoppability}.
\end{proof}

\noindent Before proving Lemma~\ref{OnlyOneSideLemma}, we introduce some notation.
Suppose $x_1, x_2, y_1, y_2 \in \Taxa$. Let $w_{x_1}$ be the total weight that separates $\{x_1\}$ from the three other points together, with similar definitions for $w_{y_1}$, $w_{x_2}$, and $w_{y_2}$. Moreover, let $w_1$ be the total weight that separates $\{x_1, y_1\}$ and $\{x_2, y_2\}$, $w_2$ be the total weight that separates $\{x_1, x_2\}$ and $\{y_1, y_2\}$, and $w_3$ be the total weight that separate $\{x_1, y_2\}$ and $\{x_2, y_1\}$, as shown \Cref{DistanceGraph}.
\begin{figure}[ht]
	\centering
	\includegraphics[scale=0.2]{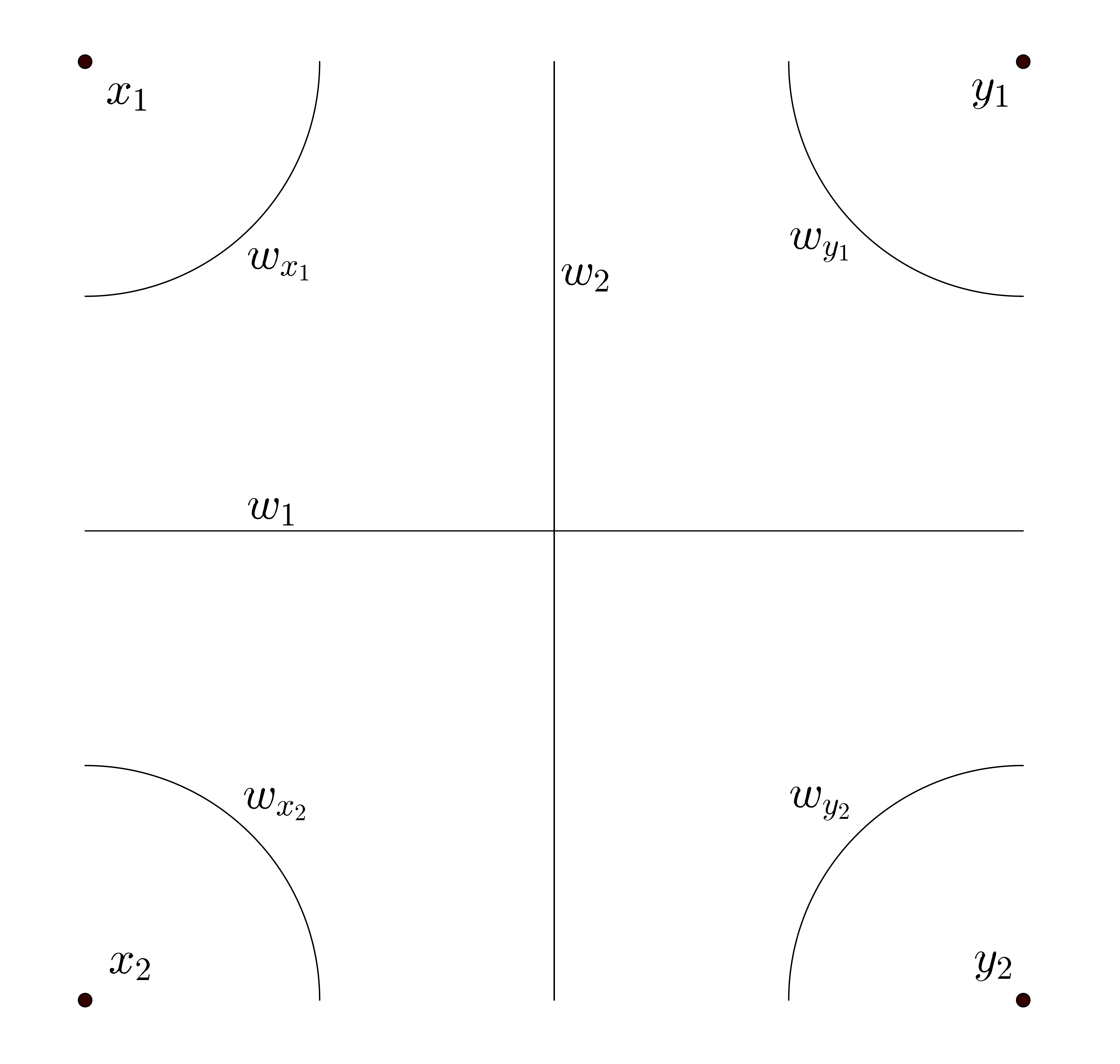}
	\caption{ Notation for Lemma~\ref{OnlyOneSideLemma}}
	\label{DistanceGraph}
\end{figure}
Then 
\begin{eqnarray*}
	d(x_1, y_1) = w_{x_1} + w_{y_1} + w_2 + w_3 &\quad& d(x_2, y_2) = w_{x_2} + w_{y_2} + w_2 + w_3 \\
	d(x_1, x_2) = w_{x_1} + w_{x_2} + w_1 + w_3 &\quad& d(y_1, y_2) = w_{y_1} + w_{y_2} + w_1 + w_3 \\
	d(x_1, y_2) = w_{x_1} + w_{y_2} + w_1 + w_2 &\quad& d(y_1, x_2) = w_{y_1} + w_{x_2} + w_1 + w_2
\end{eqnarray*}
\begin{proof}[Proof of Lemma~\ref{OnlyOneSideLemma}]
	We use the notation above.
	Let $x_1 = x$, $y_1 = y$, $x_2 = z$, $y_2 = z'$. Notice that any split that separates $\{x=x_1, y=y_1\}$ and $\{z= x_2, z'=y_2\}$ is incompatible with $S$, which implies that $w_1\leq \Omega(S) \leq \Omega$. Therefore
	\begin{eqnarray*}
		d(z,x)+d(z,y ) &=& d(x_1, x_2) + d(x_2, y_1)\\
		&=&(w_{x_1}+w_{x_2}+w_1+w_3) + (w_{x_2} + w_{y_1} + w_1 + w_2) \\ 
		&=& 2w_{x_2} + (w_{x_1}+w_{y_1}+w_2+w_3) + 2w_1\\
		&\leq& 2(w_{x_2}+w_{y_2}+w_2+w_3) + (w_{x_1}+w_{y_1}+w_2+w_3) + 2w_1\\
		&=& 2d(x_2,y_2)+d(x_1,y_1) + 2w_1\\
		&=& 2d(z,z')+d(x,y) + 2w_1\\ 
		&\leq& 2d(z,z')+d(x,y)+2\Omega.
	\end{eqnarray*}
\end{proof}

\subsection{Mini Reconstruction}

\begin{proof}[Proof of Proposition~\ref{prop:mini-correct}]
	Let $x, y \in \Taxa$ be any distinct pair of taxa with $\hat{d}(x,y) \leq \ellipseC$.
	To show that Mini Reconstruction correctly reconstructs the splits of $\Net$ restricted to $B(x,y)$, we first show that the diameter of $B(x,y)$ is at most $R$ and that therefore the distance matrix is accurate within it.
	\begin{lemma}[Diameter of $B(x,y)$]
		\label{BallIsSmall}
		For all $z,z' \in B(x,y)$, it holds that $d(z,z')\leq \M < R$.
	\end{lemma}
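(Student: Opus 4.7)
The plan is to use the triangle inequality for the true metric $d$ through both $x$ and $y$, add the two inequalities, and then translate $\hat{d}$-bounds to $d$-bounds via the distorted metric property.

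More concretely, first I would verify that each of the pairs $(z,x),(z,y),(z',x),(z',y)$ lies in the accuracy regime of $\hat{d}$. Since $z,z' \in B(x,y)$, each of $\hat{d}(z,x),\hat{d}(z,y),\hat{d}(z',x),\hat{d}(z',y)$ is at most $\ellipseR = 3\Delta + 7\Omega + 8\tau$. Using the hypothesis $R > 3\Delta + 7\Omega + \tfrac{5}{2}\epsilon$ together with $\tau < \tfrac{1}{4}\epsilon$, we get $R + \tau > 3\Delta + 7\Omega + 11\tau > \ellipseR$, so the distorted-metric accuracy condition applies and gives $d(\cdot,\cdot) < \hat{d}(\cdot,\cdot) + \tau$ for each of these four pairs.

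Second, since $d$ is a genuine metric (it satisfies the triangle inequality, as noted after Definition~\ref{metric}), I would apply it twice:
\begin{align*}
d(z,z') &\leq d(z,x) + d(x,z'),\\
d(z,z') &\leq d(z,y) + d(y,z').
\end{align*}
Adding and regrouping,
\[
2\,d(z,z') \leq \bigl[d(z,x) + d(z,y)\bigr] + \bigl[d(z',x) + d(z',y)\bigr].
\]
Replacing each $d$ by $\hat{d} + \tau$ and then using the defining inequality of $B(x,y)$ for both $z$ and $z'$, the right-hand side is bounded by $2\ellipseR + 4\tau$. Dividing by $2$ yields $d(z,z') \leq \ellipseR + 2\tau = 3\Delta + 7\Omega + 10\tau = \M$.

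Finally, for the strict inequality $\M < R$, I would again invoke $\tau < \tfrac{1}{4}\epsilon$, which gives $10\tau < \tfrac{5}{2}\epsilon$, and hence $R > 3\Delta + 7\Omega + \tfrac{5}{2}\epsilon > 3\Delta + 7\Omega + 10\tau = \M$. There is no real obstacle here; the only subtle point is recognizing that going through both $x$ and $y$ (rather than just one) is what turns the $B(x,y)$ constraint on sums into the desired bound without paying an extra $\hat{d}(x,y)$ term, which would otherwise inflate the constants beyond $\M$.
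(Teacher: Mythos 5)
Your proof is correct and follows essentially the same route as the paper: bound each of $\hat{d}(z,x),\hat{d}(z,y),\hat{d}(z',x),\hat{d}(z',y)$ to invoke the distorted-metric accuracy, convert the $B(x,y)$ constraints into $d(z,x)+d(z,y)\leq \M$ and $d(z',x)+d(z',y)\leq \M$, and average the two triangle inequalities through $x$ and $y$. The paper's proof is the same argument with the same constants.
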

	\begin{proof}
		Because $z\in B(x,y)$, we have $\hat{d}(z,x)+\hat{d}(z,y) \leq \ellipseR$. So both $\hat{d}(z,x)$ and $\hat{d}(z,x)$ are smaller or equal than $\ellipseR < R$. Hence, because $\hat{d}$ is a $(\tau,R)$-distorted metric, we have that $d(z,x) \leq \hat{d}(z, x) + \tau$ and $d(z,y) \leq \hat{d}(z, y) + \tau$, and therefore
		\begin{eqnarray*}
			d(z,x) + d(z,y) \leq \hat{d}(z,x)+\hat{d}(z,y) + 2\tau \leq \M
		\end{eqnarray*}
		Similarly,  $d(z',x)+d(z',y) \leq \M$. Putting all this together, because $d$ satisfies the triangle inequality (see e.g.~\cite{Book}), we have finally
		\begin{eqnarray*}
			d(z, z') &\leq& \frac{1}{2}[d(z,x)+d(z,y)+d(z',x)+d(z',y)] \\
			&\leq& \frac{1}{2}\cdot 2 \cdot (\M) \\
			&=& \M.
		\end{eqnarray*}
	\end{proof}
	
	It remains to show that the split
	decomposition method applied to $\hat{d}$ restricted
	to $B(x,y)$ correctly reconstructs $\AllSmallSplitsO$.
	We proceed by showing, first, that $\AllSmallSplitsO$
	is the set of all $d$-splits restricted to $B(x,y)$
	and, second, that $\AllSmallSplits$
	(the set of all $\hat{d}$-split over $B(x,y)$ which have isolation index larger than $2\tau$) is also equal to the set of all $d$-splits restricted to $B(x,y)$.
	The first claim follows essentially from Lemma~\ref{lemma:dsplits-weakcomp}, which says that
	the splits of a circular network are its $d$-splits. The second claim follows from the correctness
	of the split decomposition method (Lemma~\ref{lemma:split-decomp}) and an error analysis.
	\begin{enumerate}
		\item Because $\Splits$ is circular, so is the restriction $\AllSmallSplitsO$ of $\Splits$ to $B(x,y)$
		as intersections are smaller on the restriction. 
		For $S = \{S_1,S_2\} \in \Splits$, we let $S|_{B(x,y)} = \{S_1\cap B(x,y), S_2 \cap B(x,y)\}$ be the restriction
		of $S$ to $B(x,y)$ (if both sides are non-empty). Also,
		for $S' \in \AllSmallSplitsO$, define
		the restricted weight of $S'$ as follows
		\begin{eqnarray*}
			w|_{B(x,y)}(S') 
			= \sum_{S \in \Splits\,:\,S|_{B(x,y)} = S'} w(S),
		\end{eqnarray*}
		where the sum accounts for the fact that many full splits
		on $\Taxa$ can produce the same restricted split on $B(x,y)$.
		Then, for any $z, z'\in B(x,y)$,
		\begin{eqnarray*}
			d(z, z') = \sum_{S' \in\AllSmallSplitsO}w|_{B(x,y)}(S') \, \delta_{S'} (z,z').
		\end{eqnarray*}
		By Lemma~\ref{lemma:dsplits-weakcomp} applied
		to $(B(x,y),\AllSmallSplitsO,w|_{B(x,y)})$,
		we know that $\AllSmallSplitsO$ is the set of all $d$-splits in $B(x,y)$ and, further, 
		$\alpha_d(S') = w|_{B(x,y)}(S')$ 
		for all $S' \in \AllSmallSplitsO$ (where the isolation
		index is implicitly restricted to $B(x,y)$).

		\item By Lemma~\ref{lemma:split-decomp}, the split
		decomposition method applied to $\hat{d}$ restricted
		to $B(x,y)$ returns all $\hat{d}$-splits.
		Because $\hat{d}$ is a $(\tau,R)$-distorted metric and by \Cref{BallIsSmall}, for every pair of taxa in $B(x,y)$ the difference between $\hat{d}$ and $d$ is smaller than $\tau$. Hence, for any $x_1, x_2, y_1, y_2 \in B(x,y)$, it follows that
		\begin{eqnarray*}
			|(d(x_1,y_1)+d(x_2,y_2))-(\hat{d}(x_1,y_1)+\hat{d}(x_2,y_2))| < 2\tau,
		\end{eqnarray*}
		which in turn implies that
		\begin{eqnarray*}
			&&|\max\{d(x_1,y_1)+d(x_2,y_2), d(x_1,x_2)+d(y_1,y_2), d(x_1,y_2)+d(y_1,x_2)\} \\
			&& \qquad- \max\{\hat{d}(x_1,y_1)+\hat{d}(x_2,y_2), \hat{d}(x_1,x_2)+\hat{d}(y_1,y_2), \hat{d}(x_1,y_2)+\hat{d}(y_1,x_2)\}| < 2\tau.
		\end{eqnarray*}
		So by the definition of $\tilde{\alpha}_d$, 
		\begin{eqnarray*}
			|\tilde{\alpha}_d(\{\{x_1,y_1\},\{x_2,y_2\}\})  - \tilde{\alpha}_{\hat{d}}(\{\{x_1,y_1\},\{x_2,y_2\}\}) | < 2\tau.
		\end{eqnarray*}
		As a result, for any split $S'$ over $B(x,y)$ (not necessarily in $\AllSmallSplitsO$), we have
		\begin{eqnarray*}
			|\alpha_d(S')  - \alpha_{\hat{d}}(S') | < 2\tau.
		\end{eqnarray*}
		Hence, for any split $S'$ over $B(x,y)$, it holds that $\alpha_{\hat{d}}(S') > 2\tau$ only if $\alpha_d(S') > 0$, that is, if $S'$ is a $d$-split on $B(x,y)$. This shows that $\AllSmallSplits \subseteq \AllSmallSplitsO$. On the other hand, for any $S'\in\AllSmallSplitsO$, we have $\alpha_d(S') = w|_{B(x,y)}(S') \geq \epsilon > 4\tau$ and, therefore, we have $\alpha_{\hat{d}}(S') > 2\tau$. This shows that $\AllSmallSplitsO \subseteq \AllSmallSplits$. Combining both statements concludes the proof.
		
	\end{enumerate}

\end{proof}

\subsection{Bipartition Extension}

\begin{proof}[Proof of Proposition~\ref{prop:ext-correct}]
	Let $x, y \in \Taxa$ be any distinct pair of taxa with $\hat{d}(x,y) \leq \ellipseC$.
	By Proposition~\ref{prop:mini-correct},
	$\AllSmallSplits = \AllSmallSplitsO$.
	In particular, $\SmallSplits = \SmallSplitsO$.
	Let $S' \in \SmallSplitsO$.
	By Lemma~\ref{lemma:hoppability} and
	the fact that $\Delta + 2 \Omega < R$,
	Bipartition Extension always terminates.
	
	The next lemma is the key step to proving that
	Bipartition Extension is correct, that is, that
	it puts every $z$ outside of $B(x,y)$ on the ``right side
	of $S'$.'' In words, it shows that jumping over a split requires a long hop. It follows from the key distance estimate in Lemma~\ref{OnlyOneSideLemma}.
	\begin{lemma}[Jumping over a split requires a long hop]
		\label{OtherSideIsFar}
		Suppose $x,y\in \Taxa$ satisfies $\hat{d}(x,y)\leq \ellipseC$ and $S=\{S_x,S_y\}\in\BigSplitsO$ is a split that separates $x$ and $y$. If $z\in S_x-B(x,y)$, then $\forall z' \in S_y$, $\hat{d}(z,z') > \connectD$.
	\end{lemma}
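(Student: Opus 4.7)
The plan is to argue by contradiction. Suppose $\hat{d}(z,z') \leq \connectD = \Delta + 2\Omega + \tau$. The goal is to derive that $z$ actually lies in $B(x,y)$, contradicting the hypothesis $z \in S_x - B(x,y)$.

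The first step is to pass from $\hat{d}$ to $d$ on the two short distances. Since $\connectD < R$ and $\ellipseC < R$, the distorted metric property applied to the pairs $(z,z')$ and $(x,y)$ yields $d(z,z') < \hat{d}(z,z') + \tau \leq \Delta + 2\Omega + 2\tau$ and $d(x,y) < \hat{d}(x,y) + \tau \leq \Delta + \Omega + 2\tau$.

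Next, observe that because $x, z \in S_x$ and $y, z' \in S_y$, the split $S$ separates $\{z,x\}$ from $\{z',y\}$. This is precisely the configuration required by Lemma~\ref{OnlyOneSideLemma}, which, combined with the two bounds above, gives
\begin{eqnarray*}
d(z,x) + d(z,y) &\leq& 2 d(z,z') + d(x,y) + 2\Omega \\
&<& 2(\Delta + 2\Omega + 2\tau) + (\Delta + \Omega + 2\tau) + 2\Omega \\
&=& 3\Delta + 7\Omega + 6\tau.
\end{eqnarray*}
In particular both $d(z,x)$ and $d(z,y)$ are individually below $R$, so the distorted metric property lets us pass back to $\hat{d}$ with an additive $\tau$-loss on each term, yielding $\hat{d}(z,x) + \hat{d}(z,y) \leq d(z,x) + d(z,y) + 2\tau \leq 3\Delta + 7\Omega + 8\tau = \ellipseR$. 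But this means $z \in B(x,y)$, contradicting the hypothesis. Therefore $\hat{d}(z,z') > \connectD$.

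The only real subtlety is bookkeeping: we need to carefully check that every distance we translate between $d$ and $\hat{d}$ is bounded by $R+\tau$ so that the distorted-metric inequality applies, and we need to identify the correct assignment of $(x,y,z,z')$ to the four-tuple in Lemma~\ref{OnlyOneSideLemma}. The arithmetic then works out exactly because the constants in $\ellipseR$ and $\connectD$ were chosen to leave slack of precisely $2\tau$ after accounting for the two distortion corrections on $d(z,x), d(z,y)$.
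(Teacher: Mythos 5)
Your proof is correct and takes essentially the same route as the paper's: both hinge on Lemma~\ref{OnlyOneSideLemma} applied to the split separating $\{z,x\}$ from $\{z',y\}$, with identical distortion bookkeeping and constants; you simply run the argument as a global contradiction from $\hat{d}(z,z')\leq \connectD$, whereas the paper derives the lower bound on $d(z,z')$ directly from $z\notin B(x,y)$. The two are contrapositive rearrangements of the same chain of inequalities.
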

	\begin{proof}
		Since $z\in\Taxa-B(x,y)$,  we know that $\hat{d}(z,x)+\hat{d}(z,y) > 3\Delta + 7\Omega + 8\tau $. Hence, $d(z,x)+d(z,y) > 3\Delta + 7\Omega + 6\tau$. Indeed, if $d(z,x) + d(z,y) \leq 3\Delta + 7\Omega + 6\tau < R$, then both $d(z,x)$ and $d(z,y)$ would be less than $R$ and, by definition of a distorted metric,
		we would have
		\begin{eqnarray*}
			\hat{d}(z,x) + \hat{d}(z,y) \leq (d(z, x)+\tau) + (d(z,y)+\tau) \leq 3\Delta+7\Omega+ 8\tau,
		\end{eqnarray*}
		which would cause a contradiction.  Moreover, because $z\in S_x$ and $z'\in S_y$, we know that $S$ separates $\{z, x\}$ and $\{z', y\}$. Hence, by \Cref{OnlyOneSideLemma}, 
		$$
		d(z, z') \geq \frac{1}{2}(d(z,x)+d(z,y)-d(x,y)) - \Omega.
		$$ 
		In addition, since we chose $x,y$ such that $\hat{d}(x,y)\leq \ellipseC < R$, we know that $d(x,y) \leq \Delta+\Omega + 2\tau$. Combining all the above, we get	
		\begin{eqnarray*}
			d(z,z') &\geq& \frac{1}{2}(d(z,x)+d(z,y)-d(x,y)-2\Omega)\\
			&>& \frac{1}{2}[(3\Delta + 7\Omega + 6\tau ) - (\Delta+\Omega + 2\tau) - 2\Omega]\\
			&=& \frac{1}{2}(2\Delta+4\Omega+4\tau) = \Delta + 2\Omega + 2\tau
		\end{eqnarray*}
		
		which implies that $\hat{d}(z,z') > \Delta+2\Omega+2\tau - \tau = \connectD$.
	\end{proof}
	
	Note that every split in $S \in \Splits|_{x,y}$ has a
	non-trivial restriction $S'$ in $\SmallSplitsO$
	as $x \in S_x$ and $y \in S_y$.
	It remains to prove two things: that every $S' \in \SmallSplitsO$
	has a {\em unique} extension in $\Splits|_{x,y}$;
	and that Bipartition Extension correctly reconstructs it.
	We argue both points simultaneously by contradiction. Let
	$S \in \Splits|_{x,y}$ be an extension of $S'$, 
	that is, $S|_{B(x,y)} = S^*|_{B(x,y)} = S'$. 
	Suppose that the output of Bipartition Extension applied
	to $S'$ is {\em not} $S$ and let $z$ be the first taxon that is placed on the wrong side. That is, at the moment we are adding $z$, the split $\{\tilde{S}_x, \tilde{S}_y\}$ we are enlarging still satisfies $\tilde{S}_x\subseteq S_x$ and $\tilde{S}_y\subseteq S_y$ but, without loss of generality, $z\in S_x$ is added to $\tilde{S}_y$. The reason that we add $z$ to $\tilde{S}_y$ needs to be that there exists $z' \in \tilde{S}_y$ such that $\hat{d}(z,z')\leq\connectD$. Notice that, because $z\notin \tilde{S}_x\cup\tilde{S}_y$, $z$ is not in $B(x,y)$. Hence, $z\in S_x-B(x,y)$. But $z'\in S_y$ and satisfies $\hat{d}(z,z') \leq \connectD$---which contradicts the statement of~\Cref{OtherSideIsFar}. Hence, the output must be $S$ and that concludes the proof.
\end{proof}

\subsection{Wrapping up the proof}

\begin{proof}[Proof of Proposition~\ref{prop:exhaust}]
	That follows from Lemma~\ref{WitnessOfDepth} and
	the fact that $\Delta + \Omega < R$.
\end{proof}

\begin{proof}[Proof of Proposition~\ref{prop:weight-estimates}]
	Let	$S \in \BigSplits = \BigSplitsO$ and let $S' = S|_{B(x,y)}$ be the corresponding split on $B(x,y)$.
	As we have shown in the proof of Proposition~\ref{prop:mini-correct}, we have 
	$\alpha_d(S') = w|_{B(x,y)}(S')$ and $|\alpha_d(S')  - \alpha_{\hat{d}}(S') | < 2\tau$.
	Moreover, by the proof of Proposition~\ref{prop:ext-correct},
	$S$ is the unique extension of $S'$ to $\Taxa$
	and therefore $w|_{B(x,y)}(S') = w(S)$.
	The statement follows.
\end{proof}

\end{document}